\newcommand\tiksize{0.12}
\let\epsilon\varepsilon
\let\emptyset\varnothing
\let\leq\leqslant
\let\geq\geqslant
\let\phi\varphi
\let\tau\uptau
\let\bar\overline
\newcommand{\stkout}[1]{\ifmmode\text{\sout{\ensuremath{#1}}}\else\sout{#1}\fi}
\DeclarePairedDelimiter\abs{\lvert}{\rvert}
\DeclarePairedDelimiter{\ceil}{\lceil}{\rceil}
\DeclarePairedDelimiter\floor{\lfloor}{\rfloor}
\DeclarePairedDelimiter\angles{\langle}{\rangle}
\DeclarePairedDelimiter\mset{\{\!\!\{}{\}\!\!\}}
\newcommand*\circled[2][1.6]{\tikz[baseline=(char.base)]{
    \node[shape=circle, draw, inner sep=1pt,
        minimum height={\f@size*#1},] (char) {\vphantom{WAH1g}#2};}}
\newcommand{\sat}[2]{\ensuremath{#1 \models #2}}
\newcommand{\words}{\Sigma^*}
\newcommand{\wrt}{\text{w.r.t.}\xspace}
\def\@endtheorem{\qed\endtrivlist\@endpefalse } 
\newcommand{\R}{\mathbb R}%
\newcommand{\Rnn}{\mathbb R_{\ge 0}}%
\newcommand{\Q}{\mathbb Q}%
\newcommand{\N}{\mathbb N}
\newcommand{\B}{\mathbb B}
\newcommand{\cleq}{\preccurlyeq}
\newcommand{\msets}{\mathcal M}
\newcommand{\negc}[1]{k_{#1}}
\newcommand{\posc}[1]{c_{#1}}
\newcommand{\fit}{f}
\newcommand{\agg}{@}
\newcommand{\hDelta}{\widehat\Delta}
\newcommand{\Xsum}[2]{\textit{sum}(#1,#2)}
\newcommand{\valpha}[1]{\widehat{\alpha}_{#1}}
\newcommand{\vbeta}[1]{\widehat{\beta}_{#1}}
\newcommand{\overeq}[1]{\stackrel{#1}{=}}
\DeclareMathOperator*{\avgrate}{@_{\textit{avg}}}
\newcommand{\metaPalt}{\Psi}
\newcommand{\metaMset}{X}
\newcommand{\img}[2]{#1\odot{#2}}
\newcommand{\rest}[2]{#1\!\!\mid_{#2}}
\newcommand{\A}{\ensuremath{\mathbf{A}}}
\newcommand{\D}{\ensuremath{\mathbf{D}}}
\begin{document}

\title{Decoupled Fitness Criteria for Reactive Systems}
%
%

\author{Derek Egolf
\and
Stavros Tripakis
}
\authorrunning{D. Egolf \& S. Tripakis}
%
\institute{
Northeastern University, Boston, MA, USA\\
\email{\{egolf.d, stavros\}@northeastern.edu}}

\maketitle              
\begin{abstract}
The correctness problem for reactive systems has been thoroughly explored and is well understood. Meanwhile, the efficiency problem for reactive systems has not received the same attention. Indeed, one correct system may be less fit than another correct system and determining this manually is challenging and often done ad hoc. We (1) propose a novel and general framework which automatically assigns comparable fitness scores to reactive systems using interpretable parameters that are decoupled from the system being evaluated, (2) state the computational problem of evaluating this fitness score and reduce this problem to a matrix analysis problem, (3) discuss symbolic and numerical methods for solving this matrix analysis problem, and (4) illustrate our approach by evaluating the fitness of nine systems across three case studies, including the Alternating Bit Protocol and Two Phase Commit.
\keywords{Formal methods  \and Verification \and Reactive systems.}
\end{abstract}

\section{Introduction}

Correctness guarantees help us avoid irritating, costly, and, in some cases, deadly implementation bugs. However, two systems
that both
satisfy a
correctness specification may differ with respect to efficiency. Inefficient systems delay content delivery, use excess energy, and waste clock cycles better
spent elsewhere. Any
of these consequences could reduce the sustainability of an institution employing an inefficient system.

Much like reasoning about correctness, reasoning about efficiency is cognitively demanding, prone to errors, and requires expert insight. The framework proposed in this paper strives to eliminate this human burden, mitigate these errors, and capture the expert's insight and intentions in the parameters of the framework.

The proposed framework accomplishes these goals by assigning a comparable
\textit{fitness score} to every system, such that we can decide between two
systems
on the basis of
their score. Consider the following example.

\begin{example}
\label{ex_protocol}

Consider the finite labeled transition systems (LTSs) depicted in Fig.~\ref{fig:protocol}. Labels $s,a,t$ represent \textit{send}, \textit{acknowledge} (ack), and \textit{timeout} respectively.
The symbols !, ? (output, input)
denote rendezvous communication in which a ? transition can only be taken in one LTS if the corresponding ! transition is taken in another LTS. Transitions with neither ?, nor ! can be taken freely.

LTS $E$ represents a {\em sender} in the environment. LTSs $G$ and $B$ are `good' and `bad' receivers, respectively. $B$ is `bad' in the sense that it waits for two \textit{send} actions before replying with an
acknowledgement,
whereas $G$ replies right away. The synchronous products of the sender $E$ with
receivers $G$ and $B$, denoted $E\abs{}G$ and $E\abs{}B$,
are LTSs $M$ and $M'$, respectively.
Both $M$ and $M'$ are {\em correct}, in the sense that
they satisfy the specification
{\em every $s$ is eventually followed by an $a$} (given some fairness assumptions that prevent $a$ from being ignored indefinitely).
Because they both satisfy this specification,  $M$ and $M'$ are indistinguishable from the perspective of traditional verification and synthesis.
However, $M$ is intuitively preferable to $M'$ because $G$ is a better receiver than $B$.
As we will show in Section~\ref{sec_casestudy}, our framework assigns fitness scores $0.25$ and $0.14$ to $M$ and $M'$, respectively, and thus distinguishes $M$ as a better system.
\end{example}

\captionsetup{belowskip=0pt}
\begin{figure*}
    \captionsetup{aboveskip=5pt}
    \begin{subfigure}[b]{0.32\textwidth}
        \centering
\begin{center}
\begin{tikzpicture}[scale=\tiksize]
\tikzstyle{every node}+=[inner sep=0pt]
\draw [black] (20.3,-22.4) circle (3);
\draw (20.3,-22.4) node {$s_0$};
\draw [black] (44.7,-22.4) circle (3);
\draw (44.7,-22.4) node {$s_2$};
\draw [black] (32.5,-22.4) circle (3);
\draw (32.5,-22.4) node {$s_1$};
\draw [black] (22.538,-20.431) arc (120.16829:59.83171:7.685);
\fill [black] (30.26,-20.43) -- (29.82,-19.6) -- (29.32,-20.46);
\draw (26.4,-18.89) node [above] {$s!$};
\draw [black] (34.685,-20.373) arc (121.41673:58.58327:7.512);
\fill [black] (42.52,-20.37) -- (42.09,-19.53) -- (41.57,-20.38);
\draw (38.6,-18.77) node [above] {$t$};
\draw [black] (30.276,-24.385) arc (-59.49707:-120.50293:7.637);
\fill [black] (22.52,-24.39) -- (22.96,-25.22) -- (23.47,-24.36);
\draw (26.4,-25.94) node [below] {$a?$};
\draw [black] (42.454,-24.36) arc (-60.03415:-119.96585:7.715);
\fill [black] (34.75,-24.36) -- (35.19,-25.19) -- (35.69,-24.33);
\draw (38.6,-25.89) node [below] {$s!$};
\draw [black] (15.3,-22.4) -- (17.3,-22.4);
\fill [black] (17.3,-22.4) -- (16.5,-21.9) -- (16.5,-22.9);
\end{tikzpicture}
\end{center}
        \caption[]
        {{The sender $E$}}
        \label{fig:protocol-E}
    \end{subfigure}
    \begin{subfigure}[b]{0.32\textwidth}
                \centering
\begin{center}
\begin{tikzpicture}[scale=\tiksize]
\tikzstyle{every node}+=[inner sep=0pt]
\draw [black] (20.3,-22.4) circle (3);
\draw (20.3,-22.4) node {$g_0$};
\draw [black] (32.5,-22.4) circle (3);
\draw (32.5,-22.4) node {$g_1$};
\draw [black] (22.538,-20.431) arc (120.16829:59.83171:7.685);
\fill [black] (30.26,-20.43) -- (29.82,-19.6) -- (29.32,-20.46);
\draw (26.4,-18.89) node [above] {$s?$};
\draw [black] (30.276,-24.385) arc (-59.49707:-120.50293:7.637);
\fill [black] (22.52,-24.39) -- (22.96,-25.22) -- (23.47,-24.36);
\draw (26.4,-25.94) node [below] {$a!$};
\draw [black] (15.3,-22.4) -- (17.3,-22.4);
\fill [black] (17.3,-22.4) -- (16.5,-21.9) -- (16.5,-22.9);
\draw [black] (35.18,-21.077) arc (144:-144:2.25);
\draw (36.305,-18.75) node [right] {$s?$};
\fill [black] (35.18,-23.72) -- (35.53,-24.6) -- (36.12,-23.79);
\end{tikzpicture}
\end{center}
                \caption[]
                {{A ``good'' receiver $G$}}
                \label{fig:protocol-G}
    \end{subfigure}
    \hfill
    \begin{subfigure}[b]{0.32\textwidth}
                \centering
\begin{center}
\begin{tikzpicture}[scale=\tiksize]
\tikzstyle{every node}+=[inner sep=0pt]
\draw [black] (20.3,-22.4) circle (3);
\draw (20.3,-22.4) node {$b_0$};
\draw [black] (32.5,-22.4) circle (3);
\draw (32.5,-22.4) node {$b_1$};
\draw [black] (44.7,-22.4) circle (3);
\draw (44.7,-22.4) node {$b_2$};
\draw [black] (22.538,-20.431) arc (120.16829:59.83171:7.685);
\fill [black] (30.26,-20.43) -- (29.82,-19.6) -- (29.32,-20.46);
\draw (26.4,-18.89) node [above] {$s?$};
\draw [black] (15.3,-22.4) -- (17.3,-22.4);
\fill [black] (17.3,-22.4) -- (16.5,-21.9) -- (16.5,-22.9);
\draw [black] (34.677,-20.365) arc (121.59519:58.40481:7.488);
\fill [black] (42.52,-20.36) -- (42.1,-19.52) -- (41.58,-20.37);
\draw (38.6,-18.75) node [above] {$s?$};
\draw [black] (42.188,-24.035) arc (-61.22071:-118.77929:20.123);
\fill [black] (22.81,-24.03) -- (23.27,-24.86) -- (23.75,-23.98);
\draw (32.5,-27.02) node [below] {$a!$};
\draw [black] (47.38,-21.077) arc (144:-144:2.25);
\draw (48.505,-18.75) node [right] {$s?$};
\fill [black] (47.38,-23.72) -- (47.73,-24.6) -- (48.32,-23.79);
\end{tikzpicture}
\end{center}
                \caption[]
                {{A ``bad'' receiver $B$}
                }
                \label{fig:protocol-B}
    \end{subfigure}
    \vskip\baselineskip
    \begin{subfigure}[b]{0.475\textwidth}
                \centering

\begin{center}
\begin{tikzpicture}[scale=\tiksize]
\tikzstyle{every node}+=[inner sep=0pt]
\draw [black] (20.3,-22.4) circle (3);
\draw (20.3,-22.4) node {$p_0$};
\draw [black] (44.7,-22.4) circle (3);
\draw (44.7,-22.4) node {$p_2$};
\draw [black] (32.5,-22.4) circle (3);
\draw (32.5,-22.4) node {$p_1$};
\draw [black] (22.538,-20.431) arc (120.16829:59.83171:7.685);
\fill [black] (30.26,-20.43) -- (29.82,-19.6) -- (29.32,-20.46);
\draw (26.4,-18.89) node [above] {$s$};
\draw [black] (34.685,-20.373) arc (121.41673:58.58327:7.512);
\fill [black] (42.52,-20.37) -- (42.09,-19.53) -- (41.57,-20.38);
\draw (38.6,-18.77) node [above] {$t$};
\draw [black] (30.276,-24.385) arc (-59.49707:-120.50293:7.637);
\fill [black] (22.52,-24.39) -- (22.96,-25.22) -- (23.47,-24.36);
\draw (26.4,-25.94) node [below] {$a$};
\draw [black] (42.454,-24.36) arc (-60.03415:-119.96585:7.715);
\fill [black] (34.75,-24.36) -- (35.19,-25.19) -- (35.69,-24.33);
\draw (38.6,-25.89) node [below] {$s$};
\draw [black] (15.3,-22.4) -- (17.3,-22.4);
\fill [black] (17.3,-22.4) -- (16.5,-21.9) -- (16.5,-22.9);
\end{tikzpicture}
\end{center}
                \caption[]
                {{The product system $M := E\abs{}G$}}
                \label{fig:protocol-P}
    \end{subfigure}
    \hfill
    \begin{subfigure}[b]{0.475\textwidth}
                \centering
\begin{center}
\begin{tikzpicture}[scale=\tiksize]
\tikzstyle{every node}+=[inner sep=0pt]
\draw [black] (20.3,-22.4) circle (3);
\draw (20.3,-22.4) node {$p_0'$};
\draw [black] (32.5,-22.4) circle (3);
\draw (32.5,-22.4) node {$p_1'$};
\draw [black] (44.7,-22.4) circle (3);
\draw (44.7,-22.4) node {$p_2'$};
\draw [black] (32.5,-31.5) circle (3);
\draw (32.5,-31.5) node {$p_3'$};
\draw [black] (44.7,-31.5) circle (3);
\draw (44.7,-31.5) node {$p_4'$};
\draw [black] (15.3,-22.4) -- (17.3,-22.4);
\fill [black] (17.3,-22.4) -- (16.5,-21.9) -- (16.5,-22.9);
\draw [black] (23.14,-21.45) arc (102.69278:77.30722:14.835);
\fill [black] (29.66,-21.45) -- (28.99,-20.79) -- (28.77,-21.76);
\draw (26.4,-20.59) node [above] {$s$};
\draw [black] (35.366,-21.528) arc (101.58491:78.41509:16.104);
\fill [black] (41.83,-21.53) -- (41.15,-20.88) -- (40.95,-21.86);
\draw (38.6,-20.7) node [above] {$t$};
\draw [black] (42.3,-24.19) -- (34.9,-29.71);
\fill [black] (34.9,-29.71) -- (35.84,-29.63) -- (35.25,-28.83);
\draw (37.71,-26.45) node [above] {$s$};
\draw [black] (30.1,-29.71) -- (22.7,-24.19);
\fill [black] (22.7,-24.19) -- (23.05,-25.07) -- (23.64,-24.27);
\draw (27.35,-26.45) node [above] {$a$};
\draw [black] (41.82,-32.327) arc (-79.04745:-100.95255:16.949);
\fill [black] (35.38,-32.33) -- (36.07,-32.97) -- (36.26,-31.99);
\draw (38.6,-33.14) node [below] {$s$};
\draw [black] (35.465,-31.051) arc (95.82137:84.17863:30.908);
\fill [black] (41.73,-31.05) -- (40.99,-30.47) -- (40.89,-31.47);
\draw (38.6,-30.39) node [above] {$t$};
\end{tikzpicture}
\end{center}
                \caption[]
                {{The product system $M' := E\abs{}B$}}
                \label{fig:protocol-P'}
    \end{subfigure}
    \captionsetup{belowskip=-12pt}
    \caption[]
    {{A simple communication protocol modeled with finite LTSs.}}
    \label{fig:protocol}
\end{figure*}
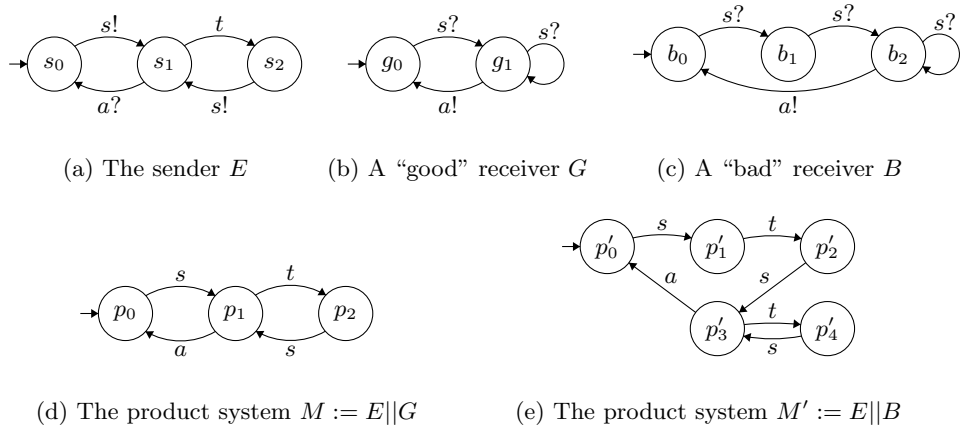
\captionsetup{belowskip=-15pt}

The exact nature of the fitness score depends on the application domain.
Our framework {\em decouples} the description of the system (e.g., the LTSs of Fig.~\ref{fig:protocol}) from a set of domain-specific {\em parameters} which capture user preferences.

By assigning fitness scores to systems, as in the example above, our framework can be used for performance evaluation.
Our framework is additionally motivated by recent work in the synthesis of distributed protocols~\cite{ScenariosHVC2014}.
Unlike humans, synthesis tools typically ignore efficiency considerations.
In some cases, these tools generate systems that are, strictly speaking, correct (i.e., they satisfy their logical specification), yet clearly unorthodox or even inefficient~\cite{AlurTripakisSIGACT17}.
In such cases, we can use our framework to rank automatically generated systems according to their fitness score.
In other cases, we may want to generate {\em all} correct systems~\cite{ATVA2023}, potentially with the aim of
doing {\em fitness-optimal synthesis} (c.f.~Appendix~\ref{sec_fitnessoptimalsynthesis}.).

In summary, the contributions of this paper are as follows:
(1) We propose a novel and general framework for automatically assigning a comparable fitness score to a system; this framework uses interpretable parameters that are decoupled from the system being evaluated.
(2) We provide an automated method for computing fitness scores; our method ultimately reduces the fitness-score computation problem to a matrix analysis problem.
(3) We discuss symbolic and numerical methods for solving this matrix analysis problem.
(4) We present an implementation and evaluation of our framework: our prototype tool allows, in a matter of seconds, to automatically compute the fitness of nine automatically synthesized systems.

We organize the rest of the paper as follows.
Section~\ref{sec_prelim} formalizes preliminary concepts.
Section~\ref{sec_framework} presents our framework, both in its full (semantical) generality and also as a finitely representable instance that can be treated algorithmically.
Section~\ref{sec:algo} presents a method to compute fitness scores for an instance of our framework.
Section~\ref{sec_casestudy} illustrates our approach on the communication protocol of Example~\ref{ex_protocol}, Two Phase Commit, and the Alternating Bit Protocol taken from~\cite{AlurTripakisSIGACT17}.
Section~\ref{sec_related} discusses related work.
Section~\ref{sec_concl} concludes the paper.

\section{Preliminaries}
\label{sec_prelim}

$\N$, $\Q$, $\R$, $\Rnn$, and $\B$ denote the sets of naturals, rationals, reals, non-negative reals, and booleans, respectively.
A function $h:\N^d\to\Q^{d'}$ is a {\em scalar arithmetic function} if $h$ can be written in terms of basic scalar arithmetic operations $+, -, \times, /$, applied to its natural number arguments.

We often formalize the semantics of a system, $M$, and its specifications, $\phi$
as subsets of $\Sigma^\omega$. When verifying that $M$ satisfies $\phi$, i.e., $M\subseteq\phi$, we do not usually consider the relative abundance of traces produced by the operational definition of $M$. We need only show that $\tau\in M$ implies that $\tau\in\phi$. In that paradigm, we disregard that there may be many ways to generate $\tau$ using $M$.

Our framework for measuring performance does not disregard the relative abundance of traces.
All else equal, if a system is capable of producing the same `unfit' trace by executing any one of many distinct runs, then that system is worse than a system that can produce the unfit trace in just one particular way.
Also, we might consider aggregates like average, mode, sum, standard deviation, etc.
and these all depend on the multiplicity of elements. Toward preserving multiplicity, we define our notation for
multisets. We also define a denotational formulation of systems that does not
abstract away the relative abundance of traces.

\subsubsection{Multisets}
A {\em multiset} $\metaMset$ over domain $D$ is a function $\metaMset:D\to\N$,
where $\metaMset(x)$ represents the {\em multiplicity} of element $x$, i.e.,
how many times $x$ occurs in $\metaMset$.
$\msets(D)$ denotes the class of all multisets over $D$, i.e., the set of all functions $\metaMset:D\to\N$.
If $\metaMset(x) = m$, then we write $x\in_m \metaMset$ (possibly, $m=0$).
The {\em cardinality} of $\metaMset$, denoted $\abs{X}$, is the sum of the multiplicities of all members of the domain $D$.
We write multisets as $\mset{...}$ to differentiate them from sets.

\begin{example}
	We denote by $X = \mset{0,0,1,1,1}$ the multiset where
     $0\in_2 X$ and $1\in_3 X$. Then:
    $\abs{X} = 2 + 3 = 5$.
\end{example}

If $A\subseteq D$ and $X:D\to\N$ is a multiset, then $X$ {\em restricted to} $A$ is a new multiset, denoted $\rest{X}{A}:D\to\N$ and defined as follows.
If $x\notin A$, $\rest{X}{A}(x) = 0$
and otherwise if $x\in A$, then $\rest{X}{A}(x) = X(x)$.
Let $\metaMset:D\to\N$ be a multiset and let $f:D\to D'$ be a function.
Then intuitively, the {\em image of $\metaMset$ by $f$} is a multiset denoted $\img{f}{\metaMset}$ obtained by applying $f$ to the members of $\metaMset$.
E.g. if $f(x) = x^2$, then $\img{f}{\mset{2,-2,3,3,3}} = \mset{4,4,9,9,9}$. Formally, we define $\img{f}{\metaMset}:D'\to\N$ as follows.
$(\img{f}{X})(y) := \abs{(\rest{X}{D_y})}$, where $D_y := \{x\in D\mid f(x) = y\}$.
We may treat a set as a multiset with all multiplicities as 0 or 1 and take its image by $f$ to obtain a multiset.
If $X\in\msets(\N^d)$ and $1\leq i\leq d$, then $\Xsum{X}{i} = \sum_{x\in_c X} cx_i$, where $x_i$ is the $i$th component of $x\in\N^d$.
E.g. $\Xsum{\mset{(1,2), (1,2), (3,4)}}{2} = 2 + 2 + 4$.

\subsubsection{Systems}
As mentioned, the semantic formulation of systems as subsets abstracts away structure that our framework needs.
Namely, for a system $M$ and a finite trace $\pi\in\Sigma^n$, there may be
many partial runs through system $M$ which produce $\pi$. Our framework
requires this structure, so we give an alternate semantic formulation of
systems.

\begin{definition}[Abstract Denotation]
    We characterize a system $M$ by an infinite family of multisets, $(M_n)_{n\in\N}$. For all $i\in\N$, $M_i\in\msets(\Sigma^i)$---i.e. the multiset indexed by $i$ assigns a multiplicity to all finite traces of length $i$.
\end{definition}

This characterization of systems abstracts away the notion of
states while maintaining the multiplicity of finite prefixes. However, there are restrictions on
which families of multisets characterize well-defined systems;
those restrictions follow.

\begin{definition}
    A family of multisets $(M_n)_{n\in\N}$ denotes a system if and
    only if for all $\pi,\pi'\in\Sigma^*$ such that $\pi$ is a prefix of $\pi'$,
    $\pi\in_0 M_{\abs{\pi}} \implies \pi'\in_0 M_{\abs{\pi'}}$
\end{definition}

This restriction enforces the following intuition: for a partial run to produce $\pi'$, there must be at least one partial run for each prefix of $\pi'$.
No further restrictions are necessary. A partial run that produces a prefix of $\pi'$ may either have zero or many continuations which produce $\pi'$.

\begin{example}[Two Systems]
    \label{ex:two-systems}
    We now define two systems $M^{(1)}$ and $M^{(2)}$, each as a family of multisets of finite prefixes over alphabet $\Sigma = \{0,\$\}$.
    The user may interpret these traces
    as follows: \$'s are money that we receive, and 0's are lapses in this income. Intuitively, we prefer behaviors that maximize the rate at which we receive \$'s.

    All partial runs of $M^{(1)}$ produce just one finite trace and this finite trace has multiplicity 1. In particular, $\$^n\in_1 M^{(1)}_n$ and if $w\neq\$^n$, then $w\in_0M^{(1)}_n$. In other words, $M^{(1)}$ is the system that generates prefixes of $\$^\omega$, each with multiplicity 1. We may simply express this system as
    $M^{(1)}_n := \mset{\$^n}$.

    Similarly, the partial runs of $M^{(2)}$ also produce just one finite trace each with multiplicity 1. All such finite traces are alternating \$ and 0, i.e. the sole partial trace of length $n$ is the $n$-length prefix of $(\$0)^\omega$. This system may be expressed as
    $M^{(2)}_n := \mset{(\$0)^{\floor{n/2}}\$^{(n\bmod 2)}}$,
    i.e. even length prefixes end in 0 and odd length prefixes end in \$.
\end{example}

\section{Formal Framework}
\label{sec_framework}

The framework assigns a comparable value called a {\it fitness score} to every system. In this section, we define
this score formally.
We first present the general, semantical framework (Section~\ref{sec_general}).
We then instantiate this general framework and state the main problem solved in this paper (Section~\ref{sec_finite_instantiation}).

\subsection{The General Framework}
\label{sec_general}

The key idea of our framework is that it {\em decouples} the description of the system from the following set of domain-specific framework parameters:
(1) A finite alphabet $\Sigma$, e.g., $\{0,\$\}$.
(2) A {\it{fitness function}}, $\fit:\Sigma^*\to\N^d$. This function measures finite prefixes of infinite traces.
\label{aggregate_type}
(3) An {\it{aggregate function}}, $\agg:\msets(\N^d)\to\Q^{d'}$. This function takes a multiset of fitness values and compiles them into a single value. Examples include min, max, average, etc. taken over arithmetic combinations of natural numbers. The dimensionality of the output, $d'$, enables lexicographic aggregates.
Given these two functions, the framework assigns a {\it fitness score} to every system.
The fitness score is a $d'$-dimensional vector, defined formally in Definition~\ref{def_fitnessscore}.
In addition, the framework may also include:
(4) A comparison relation, $\cleq$,
used to compare the fitness scores of two different systems (c.f. eq.~(\ref{def:cleq-form})).
We next provide examples of the above concepts.

\subsubsection{Fitness Function:}

The {\it rate} function is an example of a fitness function:
    \begin{definition}[Fitness Function: Rate of \$]
        \label{def:income-rate}
        For $\Sigma = \{0,\$\}$ define $\textit{rate$_\$$}(w) = (\#_\$(w), \abs{w})$, where $\#_\$(w)$ is the number of \$'s in $w$ and $\abs{w}$ is the length of $w$.
    \end{definition}
    \begin{example}[Rate of \$ Applied]
        \label{ex:income-rate}
        Recall the systems $M^{(1)}_n = \mset{\$^n}$ and $M^{(2)}_n = \mset{(\$0)^{\floor{n/2}}\$^{(n\bmod 2)}}$ from Example \ref{ex:two-systems}. We apply $f := \textit{rate$_\$$}$ to the $n$-length partial runs of these systems. Taking the image of $M^{(1)}$ and $M^{(2)}$ by $f$ yields:
        \begin{align*}
            &\img{f}{M^{(1)}_n} = \mset{f(\$^n)} = \mset{(n,n)}
            \\
            &\img{f}{M^{(2)}_n} = \mset{f((\$0)^{\floor{n/2}}\$^{(n\bmod 2)})} = \mset{(\ceil{n/2},n)}
        \end{align*}
    \end{example}

\subsubsection{Aggregate Functions:}
The {\it average rate} function is one example of an aggregate function. It
     treats ordered pairs as fractions and takes the average value:
    \begin{definition}[Aggregate Function: Average Rate]
        \label{def:avg-rate}
        For $X\in\msets(\N^2)$, let:
        $$\avgrate(X) = \frac{1}{\abs{X}}\sum_{(p,q)\in_m X}m\cdot\frac{p}{q}$$
    \end{definition}
    \begin{example}
        This example emphasizes the role of multiplicity in aggregates. For instance, if $X := \{(1,3),(1,3),(2,3)\}$,
        then the (1,3) term is counted twice:
        \begin{align*}
            \avgrate(X)
            \ \ =\ \ \frac{1}{\abs{X}}\sum_{(p,q)\in_m X}m\cdot\frac{p}{q}
            \ \ =\ \ \frac{1}{3}(2\cdot\frac{1}{3} + \frac{2}{3})
            \ \ =\ \ 4/9
        \end{align*}
    \end{example}
    \begin{example}
        \label{ex:avg-rate}
        This example applies $\avgrate$ to the running example (Example~\ref{ex:two-systems}). The average is moot here as there is only one partial trace of each length. Recall from Example \ref{ex:income-rate} that $\img{f}{M^{(1)}_n} = \mset{(n,n)}$ and $\img{f}{M^{(2)}_n} = \mset{(\ceil{n/2}, n)}$,
        where $f := \textit{rate$_\$$}$. We can apply average rate to these images:
        $\avgrate(\img{f}{M^{(1)}_n}) = n/n = 1$
        and
        $\avgrate(\img{f}{M^{(2)}_n}) = \ceil{n/2}/n$.
    \end{example}

Another example of an aggregate function is the {\it maximum rate} function:
    \begin{definition}[Aggregate Function: Maximum Rate]
        \label{def:max-rate}
        For $X\in\msets(\N^2)$:
        $$\agg(X) = \max\{p/q\mid (p,q)\in X\}$$
    \end{definition}
    \begin{example}
        For instance, if $X := \mset{(1,3),(1,3),(2,3)}$, then:
        $$\agg(X) = \max\{1/3, 1/3, 2/3\} = 2/3$$
    \end{example}

Another example of an aggregate function is the {\it lexicographic} function:
    \begin{definition}[Aggregate Function: Lexicographic]
        For $X\in\msets(\N^2)$. Here is an example where $d' = 2 \neq 1$.
        Let $\agg_1(X)$ be Average Rate from Definition \ref{def:avg-rate} and let $\agg_2(X)$ be Maximum Rate from Definition \ref{def:max-rate}. Finally define
        $$\agg(X) = (\textstyle\agg_1(X), \textstyle\agg_2(X))$$
        Here, the first component of the aggregate is the average rate and the second component is the maximum rate.
    \end{definition}
    \begin{example}
        \label{ex:lex-rate}
        For $X_1 = \mset{(3,6), (2,6), (4,6)}$ and $X_2 = \mset{(1,4),(2,4),(3,4)}$:
        $$\agg(X_1) = (\textstyle\agg_1(X_1), \textstyle\agg_2(X_1)) = (1/2, 2/3)$$
        $$\agg(X_2) = (\textstyle\agg_1(X_2), \textstyle\agg_2(X_2)) = (1/2, 3/4)$$
        This sort of lexicographic aggregate (together
        with a corresponding comparison relation, see below) can
        be useful for breaking
        ties between choices like $X_1$ vs. $X_2$.
That is, we may, for instance, want to maximize the average rate when possible, but if two distinct choices yield the same average rate, we may want to make the choice that has the highest potential payoff; in that case, we prefer $X_2$.
    \end{example}

\subsubsection{Fitness Score:}

Given the above parameters, our framework assigns a fitness score to every system $M = (M_n)_{n\in\N}$. It does so as follows:
\begin{definition}[Fitness score]
    \label{def_fitnessscore}
    The fitness score of system $M$ is
    $$\agg_f M := \lim_{n\to\infty}\agg(\img{f}{M_n})$$
    This limit is a value in $(\Rnn\cup\{\infty, \bot\})^{d'}$.
    Each component of the vector:
	either converges to a value $v\in\Rnn$, in which case we assign the component the value $v$;
	or increases without bound, in which case we assign the value $\infty$;
	or exhibits some other behavior such as oscillation, in which case we assign the ill-behaved value $\bot$.
\end{definition}
	The framework is quite general, so there are systems that have oscillating fitness scores, i.e., $\bot$ (see Example \ref{ex:osc-fitness} in the Appendix). In what follows, we limit our attention specifically to the class of systems representable by finite transition systems.
	Whether this class contains systems that have $\bot$ fitness scores remains open. However, the system in Example~\ref{ex:osc-fitness} does not belong to this class.

\subsubsection{Comparison Relations:}

    A comparison relation $\cleq$ is a subset of
    \begin{equation}
        \label{def:cleq-form}
    (\Rnn\cup\{\infty, \bot\})^{d'}\times(\Rnn\cup\{\infty, \bot\})^{d'}
    \end{equation}
    If $(a,b)\in\ \cleq$, we write $a\cleq b$. If neither $a\cleq b$ nor $b\cleq a$, we say that $a$ and $b$ are {\em incomparable}.

    Ignoring $\infty$ and $\bot$ for the moment, $\cleq$ could be the relation $\leq$ on $\R$ when $d'=1$, or the  lexicographic comparator when $d'=2$: $$(a_1,b_1)\cleq(a_2,b_2)\iff(a_1 > a_2) \vee (a_1 = a_2 \wedge b_1\geq b_2)$$
This comparator is the sort we would want for Example \ref{ex:lex-rate}.

Extending the above to $\infty$ and $\bot$ would be up to the user. One choice is to have these values be incomparable to any other value.
We also remark that $\cleq$ needs to compare real (and not just rational) numbers, even though the aggregate function $\agg$ maps to $\Q^{d'}$, because the fitness score involves taking a limit.
The semantics of $a\cleq b$ are that $a$ is preferrable to $b$.

\begin{example}
Concluding our analysis of Example~\ref{ex:two-systems},
consider an instance of our framework with fitness function \textit{rate$_\$$} (Definition~\ref{def:income-rate}), aggregate function $\avgrate$ (Definition~\ref{def:avg-rate}), and comparison operator $\cleq\ :=\ \geq$ (since we prefer high rates of income). We can then compare the two simple systems introduced in Example~\ref{ex:two-systems}. Building on what we have presented so far (c.f. Examples~\ref{ex:income-rate} and \ref{ex:avg-rate}), we have:
    \begin{align*}
        \agg_f M^{(1)}
        &
        \ \ =\ \  \lim_{n\to\infty}\agg(\img{f}{M^{(1)}_n})
        \ \ =\ \  \lim_{n\to\infty} 1
        \ \ =\ \  1
        \\
        \agg_f M^{(2)}
        &
        \ \ =\ \  \lim_{n\to\infty}\agg(\img{f}{M^{(2)}_n})
        \ \ =\ \  \lim_{n\to\infty}\frac{\ceil{n/2}}{n}
        \ \ =\ \  1/2
    \end{align*}
    Because $\agg_f M^{(1)} \geq \agg_f M^{(2)}$, we conclude $\agg_f M^{(1)} \cleq \agg_f M^{(2)}$ and
    therefore we prefer $M^{(1)}$ to $M^{(2)}$. This result aligns with our
    intuitions; we would rather receive a dollar every day than a dollar every
    other day.
\end{example}

\subsubsection{Evaluation, Comparison, and Synthesis Problems:}
Within our framework, we can consider various types of computational problems.
A basic problem is that of {\it evaluating} the fitness score of a given system:
{\it Given a fitness function $f$, an aggregate function $\agg$, and a system $M$, compute $\agg_f M$.}
Another problem is that of {\it comparing} two systems:
{\it Given a fitness function $f$, an aggregate function $\agg$, a comparison relation $\cleq$, and two systems $M_1,M_2$,
check whether $\agg_f M_1 \cleq \agg_f M_2$.}
We can also consider {\it fitness-optimal synthesis} problems like the one presented in Appendix~\ref{sec_fitnessoptimalsynthesis}.

The problems described above are abstract in the sense that our framework is semantical.
In order to
define concrete computational problems of this sort, we need some concrete, syntactic representation of the elements of our framework, namely, systems, fitness functions, etc.
We present one such representation in Section~\ref{sec_finite_instantiation} that follows.

\subsection{Syntactic Representation of the Framework}
\label{sec_finite_instantiation}

We represent systems using {\it finite labeled transition systems}:
\begin{definition}[Finite Labeled Transition System]
	A finite labeled transition system (LTS) is a tuple $M = \angles{\Sigma, Q, Q_0, \Delta}$, where
    \begin{itemize}
        \item $\Sigma$ is a finite set of labels
        \item $Q$ is a finite set of states
        \item $Q_0\subseteq Q$ is the set of initial states
        \item $\Delta\subseteq Q\times \Sigma\times Q$ is a transition
        relation
    \end{itemize}
\end{definition}
We now define the denotation, $(M_n)_{n\in\N}$, of labeled transition
system $M$. We first define the {\it{path relation}} of $M$,
    $\hDelta\subseteq Q^*\times\Sigma^*$, in terms of its members.
    Let $\hat q = q^0,...,q^{k}\in
    Q^*$ and $w=w_1w_2...w_{k}\in\Sigma^*$. Then $(\hat q,w)\in\hDelta$
    if and only if:
        (1) $q^0 \in Q_0$ and
        (2) For all $i$ such that $0\leq i < k$,
        $(q^i,w_{i+1},q^{i+1})\in\Delta$.
Note: $(q_0,\epsilon)\in\hDelta$ where $q_0\in Q_0$ and $\epsilon$ is the empty sequence of labels.

Then we define $(M_n)_{n\in\N}$ by
    defining each $M_n$.
    We define $M_n$ by defining the multiplicity
    of each $w\in\Sigma^n$.
    Namely, for a fixed $w\in\Sigma^n$, $w\in_c M_n$ if
    and only if
    $$c = \abs{\{\hat q\in Q^*\mid(\hat q,w)\in\hDelta\}}$$

We represent fitness functions by {\it deterministic finite state automata} (DFA).
Specifically,
    a fitness function $f:\Sigma^*\to\N^d$ is represented by a $d$-tuple
    $\angles{f_1,...,f_d}$, where each $f_i$ is a DFA defined as follows:

\begin{definition}[DFA]
 A DFA is a tuple
    $f_i = \angles{\Sigma,Q_i,q_i^0, Q_i^{\textit{acc}},\delta_i}$,
    where
    \begin{itemize}
        \item $\Sigma$ is a finite set of labels
        \item $Q_i$ is a finite set of states
        \item $q_i^0\in Q_i$ is the single initial state of the automaton
        \item $Q_i^{\textit{acc}}\subseteq Q_i$ is the set of accepting states
        \item $\delta_i:Q_i\times\Sigma\to Q_i$ is the transition function
    \end{itemize}
\end{definition}
Now consider an input $w\in\words$.
When
the DFA $f_i$
consumes $w$, it visits a sequence of states, $\hat{q}=q_i^0, q_i^1, ...,q_i^m$.
Interpreting $f_i$ as a function $f_i : \Sigma^*\to\N$,
	we define $f_i(w)$ as the number of times an accepting state is
    visited in $\hat{q}$.
	We then define the fitness function $f:\Sigma^*\to\N^d$ so that $f(w) = (f_1(w),...,f_d(w))$.
    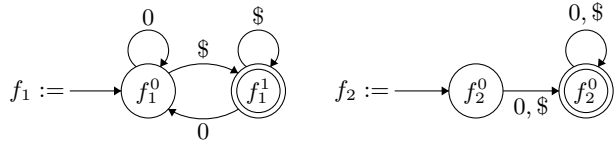
\begin{figure}[h]
        \centering
\begin{center}
\begin{tikzpicture}[scale=\tiksize]
\tikzstyle{every node}+=[inner sep=0pt]
\draw [black] (51.6,-29.2) circle (3);
\draw (51.6,-29.2) node {$f_2^0$};
\draw [black] (15.3,-29.2) circle (3);
\draw (15.3,-29.2) node {$f_1^0$};
\draw [black] (27.5,-29.2) circle (3);
\draw (27.5,-29.2) node {$f_1^1$};
\draw [black] (27.5,-29.2) circle (2.4);
\draw [black] (63.8,-29.2) circle (3);
\draw (63.8,-29.2) node {$f_2^0$};
\draw [black] (63.8,-29.2) circle (2.4);
\draw [black] (6.6,-29.2) -- (12.3,-29.2);
\draw (6.1,-29.2) node [left] {$f_1:=$};
\fill [black] (12.3,-29.2) -- (11.5,-28.7) -- (11.5,-29.7);
\draw [black] (42.6,-29.2) -- (48.6,-29.2);
\draw (42.1,-29.2) node [left] {$f_2:=$};
\fill [black] (48.6,-29.2) -- (47.8,-28.7) -- (47.8,-29.7);
\draw [black] (13.977,-26.52) arc (234:-54:2.25);
\draw (15.3,-21.95) node [above] {$0$};
\fill [black] (16.62,-26.52) -- (17.5,-26.17) -- (16.69,-25.58);
\draw [black] (26.177,-26.52) arc (234:-54:2.25);
\draw (27.5,-21.95) node [above] {$\$$};
\fill [black] (28.82,-26.52) -- (29.7,-26.17) -- (28.89,-25.58);
\draw [black] (62.477,-26.52) arc (234:-54:2.25);
\draw (63.8,-21.95) node [above] {$0,\$$};
\fill [black] (65.12,-26.52) -- (66,-26.17) -- (65.19,-25.58);
\draw [black] (17.538,-27.231) arc (120.15965:59.84035:7.687);
\fill [black] (25.26,-27.23) -- (24.82,-26.4) -- (24.32,-27.26);
\draw (21.4,-25.69) node [above] {$\$$};
\draw [black] (25.303,-31.213) arc (-58.88262:-121.11738:7.552);
\fill [black] (17.5,-31.21) -- (17.92,-32.05) -- (18.44,-31.2);
\draw (21.4,-32.8) node [below] {$0$};
\draw [black] (54.6,-29.2) -- (60.8,-29.2);
\fill [black] (60.8,-29.2) -- (60,-28.7) -- (60,-29.7);
\draw (57.7,-29.7) node [below] {$0,\$$};
\end{tikzpicture}
\end{center}
        \caption{Two examples of DFA representing fitness functions: $f_1$ computes the number of $\$$'s in a word;
        $f_2$ computes the length of the word.}
        \label{fig:fit-repr}
    \end{figure}
\begin{example}[Rate]
\label{ex_rate_DFA}
    Let $f_1(w) :=$ the number of $\$$'s in $w$ and $f_2(w) :=$ the length of $w$.
    We represent these individual components of $f = \angles{f_1, f_2}$ by
    the DFA in Fig.~\ref{fig:fit-repr}. The example is detailed further in
	Appendix~\ref{sec_example_exrateDFA_continued}.
\end{example}

In principle, an aggregate function can be any mathematical function with the appropriate type (c.f. page \pageref{aggregate_type}).
But for the sake of computation, we want an aggregate function to be represented as a scalar arithmetic function $h(x_1,x_2,...,x_d)$.
We say that $h : \N^d\to\Q^{d'}$ is a {\em faithful representation} of $\agg : \msets(N^d)\to\Q^{d'}$ if and only if for all $X\in\msets(\N^d), \agg(X) = h(\Xsum{X}{1},...,\Xsum{X}{d})$.
We will see in Section~\ref{sec:algo} that this form of representation and the definitions that follow are key, as the heart of our method is computing each $\Xsum{X}{i}$, where $X = \img{f}{M_n}$. The importance should be clear by the time we state our primary correctness result, Theorem~\ref{thm:main-correctness}.

While $h$ might not be a faithful representation of $\agg$ for all $X$, $h$ may be a faithful representation assuming that $X$ satisfies some condition. The fitness function may in turn guarantee that $X$ satisfies that condition.
Fortunately, this relationship holds between $\avgrate$ (Def.~\ref{def:avg-rate}) and $\textit{rate}_{\$}$ (Def.~\ref{def:income-rate}).
The following definition and lemmas capture this useful situation:

\begin{definition}[Conditional Representation and Compatible]
\label{defcondcomposep}
    Let $\metaPalt$ be a predicate over $\msets(\N^d)$, i.e., a mapping $\metaPalt:\msets(\N^d)\to\B$.
    Additionally, let $\agg:\msets(\N^d)\to\Q^{d'}$ be an aggregate function and $h : \N^d\to\Q^{d'}$ be a scalar arithmetic function.
    Then $h$ is a {\em conditional representation}
    of $\agg$ subject to $\metaPalt$
    if and only if
    for all $X\in\msets(\N^d)$, if $\metaPalt(X)$ holds (i.e., $\metaPalt(X)=1$), then
        $\agg(X) = h(\Xsum{X}{1}, ..., \Xsum{X}{d})$.

	Let $h$ be a conditional representation of the aggregate function $\agg$ subject to $\metaPalt$. Let $f$ be a fitness function.
	We say that $h$ and $f$ are {\em compatible}
	when $\metaPalt(\img{f}{M_n})$ holds for any LTS $M$ and any $n\in\N$.
\end{definition}

Let predicate  $\metaPalt_\textit{rate}(X) := \text{`If $(p,q), (p',q')\in X$, then $q = q'$.'}$ Then we have the following two lemmas.
\begin{lemma}
\label{thm_avg_agg_cond_comp_separable}
    Let $X\in \msets(\N^2)$ and suppose $\metaPalt_\textit{rate}(X)$ holds. Then
    $\avgrate(X)
    = \Xsum{X}{1}/\Xsum{X}{2}
    $.
     Therefore, $\avgrate$ is conditionally represented by $h(x_1,x_2) = x_1/x_2$, subject to
    $\metaPalt_\textit{rate}$.
\end{lemma}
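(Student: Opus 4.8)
The plan is to unwind the definition of $\avgrate$ and use the hypothesis $\metaPalt_\textit{rate}(X)$ to pull the common denominator out of the sum. First I would note that if $X$ is empty or, more to the point, if all pairs in $X$ share the same second coordinate, we can write that common value as $q$; that is, there exists $q\in\N$ such that every $(p,q')\in X$ has $q'=q$. (The degenerate case where $X$ has no elements needs a word, since then $\abs{X}=0$ and the formula $\Xsum{X}{1}/\Xsum{X}{2}$ is $0/0$; presumably the intended reading, guaranteed by compatibility with $\textit{rate}_\$$ on nonempty $M_n$, is that $X$ is nonempty, so I would either assume this or remark that for $n\ge 1$ the image $\img{f}{M_n}$ is nonempty.)

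Next I would compute directly:
\begin{align*}
\avgrate(X)
&= \frac{1}{\abs{X}}\sum_{(p,q')\in_m X} m\cdot\frac{p}{q'}
= \frac{1}{\abs{X}}\sum_{(p,q)\in_m X} m\cdot\frac{p}{q}
= \frac{1}{q}\cdot\frac{1}{\abs{X}}\sum_{(p,q)\in_m X} m\cdot p .
\end{align*}
Then I would observe two things: $\sum_{(p,q)\in_m X} m\cdot p = \Xsum{X}{1}$ by the definition of $\textit{sum}$ (the first component summed with multiplicity), and $\abs{X} = \sum_{(p,q)\in_m X} m$, so $q\cdot\abs{X} = \sum_{(p,q)\in_m X} m\cdot q = \Xsum{X}{2}$, again by the definition of $\textit{sum}$ applied to the second component, using that the second component is constantly $q$. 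Substituting gives $\avgrate(X) = \Xsum{X}{1}/(q\cdot\abs{X}) = \Xsum{X}{1}/\Xsum{X}{2}$, which is exactly the claim.

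The conditional-representation conclusion is then immediate: taking $h(x_1,x_2) := x_1/x_2$, we have just shown $\avgrate(X) = h(\Xsum{X}{1},\Xsum{X}{2})$ whenever $\metaPalt_\textit{rate}(X)$ holds, and $h$ is visibly a scalar arithmetic function (a single division), so it is a conditional representation of $\avgrate$ subject to $\metaPalt_\textit{rate}$ in the sense of Definition~\ref{defcondcomposep}. I do not expect any real obstacle here; the only mild subtlety is bookkeeping around the empty multiset and making the step $q\cdot\abs{X}=\Xsum{X}{2}$ rigorous, which is purely a matter of carefully matching the multiplicity-weighted sums to the definition of $\textit{sum}(X,i)$.
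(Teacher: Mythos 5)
Your proof is correct and takes essentially the same route as the paper, which justifies the lemma in a single sentence (``the average of a multiset of fractions equals the sum of the numerators divided by the sum of the denominators when the denominators are all equal''); your computation, including the identification $\Xsum{X}{2}=q\cdot\abs{X}$, is exactly the detailed version of that remark. Your aside about the empty multiset is a reasonable bit of care that the paper silently elides.
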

\begin{lemma}
\label{thm_rate_sat_pred}
    For all $n\in\N$ and all LTS $M$, $\metaPalt_\textit{rate}(\img{\textit{rate}_\$}{M_n})$ holds.
    Hence, $\textit{rate}_\$$ and $h(x_1,x_2) = x_1/x_2$ are compatible.
\end{lemma}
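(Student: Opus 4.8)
The plan is to unfold the definition of the image multiset $\img{\textit{rate}_\$}{M_n}$ and observe that the second (length) component of $\textit{rate}_\$$ is completely determined once the length of the input word is fixed, while the input length is forced to be exactly $n$ by the typing of $M_n$. The ``hence'' clause will then be immediate by combining this with Lemma~\ref{thm_avg_agg_cond_comp_separable} and the definition of compatibility in Definition~\ref{defcondcomposep}.

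Concretely, first I would fix an arbitrary LTS $M$ and $n\in\N$ and take any two pairs $(p,q)$ and $(p',q')$ occurring (with positive multiplicity) in $\img{\textit{rate}_\$}{M_n}$. By the definition $(\img{f}{X})(y) = \abs{\rest{X}{D_y}}$ with $D_y=\{x\mid f(x)=y\}$, a pair $y$ can have positive multiplicity in $\img{\textit{rate}_\$}{M_n}$ only if there is some word $w$ with $\textit{rate}_\$(w)=y$ that itself has positive multiplicity in $M_n$. So I obtain words $w,w'$ with $\textit{rate}_\$(w)=(p,q)$, $\textit{rate}_\$(w')=(p',q')$, each of positive multiplicity in $M_n$.

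Next I would invoke the typing constraint $M_n\in\msets(\Sigma^n)$: since $M_n$ is a function $\Sigma^n\to\N$, every word of positive multiplicity in $M_n$ has length exactly $n$. Hence $\abs{w}=\abs{w'}=n$, and since $\textit{rate}_\$(u)=(\#_\$(u),\abs{u})$ by Definition~\ref{def:income-rate}, we get $q=\abs{w}=n=\abs{w'}=q'$. This is precisely the statement $\metaPalt_\textit{rate}(\img{\textit{rate}_\$}{M_n})$. For the ``hence'' part: Lemma~\ref{thm_avg_agg_cond_comp_separable} already gives that $h(x_1,x_2)=x_1/x_2$ is a conditional representation of $\avgrate$ subject to $\metaPalt_\textit{rate}$, and we have just shown $\metaPalt_\textit{rate}(\img{\textit{rate}_\$}{M_n})$ holds for every LTS $M$ and every $n$, which is exactly the compatibility condition of Definition~\ref{defcondcomposep}.

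I do not expect a real obstacle here; the proof is a direct unwinding of definitions. The one point that warrants care is the intended reading of the unsubscripted $\in$ in $\metaPalt_\textit{rate}$: it must denote membership with \emph{positive} multiplicity, since under the alternative reading ``$x\in_m X$ for some $m$, possibly $m=0$'' the predicate would fail for essentially every nonempty multiset (for instance both $(0,1)$ and $(0,2)$ would ``belong'' to any such $X$ with multiplicity $0$). Under the positive-multiplicity reading the argument above goes through verbatim.
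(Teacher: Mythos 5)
Your proof is correct and matches the paper's own argument, which is the same one-line observation that any $w$ occurring in $M_n$ has length $n$, forcing the second component of $\textit{rate}_\$(w)$ to equal $n$; your expansion via the definition of the image multiset and the remark on positive multiplicity are just careful unfoldings of that same idea. No issues.
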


\noindent
Lemma~\ref{thm_avg_agg_cond_comp_separable} follows from the fact that the average of a multiset of fractions is equal to the sum of the numerators divided by the sum of the denominators when the denominators are all equal.
Lemma~\ref{thm_rate_sat_pred} is immediate: if $w\in M_n$ and $\textit{rate}_\$(w) = (p,q)$, then $q = n$.
From Lemma~\ref{thm_avg_agg_cond_comp_separable} and~\ref{thm_rate_sat_pred} it follows that $\avgrate$ and $\textit{rate}_\$$ are compatible.
Therefore,
if the fitness function is $\textit{rate}_\$$
we can represent $\avgrate(X)$ with the expression $\Xsum{X}{1} / \Xsum{X}{2}$.

Note that fitness functions other than $\textit{rate}_\$$ might not
be compatible with $\avgrate$.
 For instance, let $f(w) = (\#_\$(w), \#_0(w))$, which measures the number of \$'s per 0.
$f$ does not satisfy $\metaPalt_\textit{rate}$, but it is a realistic fitness function.
In the case of $\textit{rate}_\$$,  time is measured by the observation of any label from $\Sigma$. Now for $f$, time is measured using only 0. If $\$$ denotes a local action of a server and $0$ an interaction between two servers, $f$ captures communication complexity. We leave handling of such non-compatible fitness functions for future work.

\subsubsection{The Fitness Evaluation Problem:}

We are now ready to state the fitness-score evaluation problem for systems represented as finite LTSs,
fitness functions represented as DFA, and aggregate functions represented as arithmetic expressions.
We provide a solution to this problem
in Section~\ref{sec:algo}.

\begin{problem}[Fitness Evaluation Problem]
\label{prob_}
Let $M = \angles{\Sigma,Q,Q_0,\Delta}$ be a finite LTS and let
$f = \angles{f_1,..., f_d}$, where each $f_i$
is represented as a DFA.
Let $\agg:\msets(\N^d)\to\Q^{d'}$ be an aggregate function
represented by the scalar arithmetic function $h : \N^d\to\Q^{d'}$. Finally, suppose that $h$ and $f$ are compatible.
The fitness evaluation problem is to compute the fitness score $\agg_f M$ of $M$, i.e.,
to compute $\lim_{n\to\infty}\agg(\img{f}{M_n})$.
\end{problem}

\section{Reducing Fitness Evaluation to Matrix Analysis}
\label{sec:algo}

In this section we propose a method to solve Problem~\ref{prob_} that consists in the following steps
(assuming the same notation and setup as in Problem~\ref{prob_}):
\begin{enumerate}
    \item \label{step_product} Compute the product automaton $P_i = M\abs{}f_i$, for each $i\in\{1,...,d\}$.
    \item \label{step_recur} For each $P_i$, compute a matrix-vector pair ($\xi_i$,$v_i$)
    representing a {\em recurrence relation}. We call the matrix $\xi_i$ the {\em recurrence matrix} and the vector $v_i$ the {\em initial condition vector}.
    \item \label{step_matrix}
	Solve the following matrix analysis problem:
\end{enumerate}

\begin{problem}
\label{prob2}
    Let $g_i(n) = (\xi_i^{n+1} v_i)_0$ for fixed square matrices $\xi_1,..., \xi_d$ and vectors $v_1,..., v_d$ with non-negative integer entries and where $(u)_0$ denotes the first entry of vector $u$. Let $h:\N^d\to\Q^{d'}$ be
    a scalar arithmetic function.
    Compute $$\lim_{n\to\infty}h(g_1(n), g_2(n),..., g_d(n))$$
\end{problem}

The motivation for the above steps follows. In step 1, the product $P_i$ represented all simultaneous paths through $M$ and $f_i$. I.e., a path through $P_i$ corresponds to taking a path through $M$ and handing the transition label encountered at each step to the automaton representing $f_i$. As mentioned, step 2 computes a recurrence relation, which is reasonable because the number of accepting states visited across $(n+1)$-length paths is related to certain quantities computed over the $n$-length paths. The exact relationship is explained in detail in Section~\ref{sec_recur_rel}.

The correctness of the reduction to Problem~\ref{prob2}
(Corollary~\ref{cor:main-correctness})
hinges on the fact that $g_i(n) = \Xsum{\img{f}{M_n}}{i}$, i.e., computing $\Xsum{\img{f}{M_n}}{i}$ (which is then an input to the aggregate function) reduces to computing the $n$th term of a recurrence relation, which in turn reduces to taking a matrix power.

Step~\ref{step_product} of the method (computing automata products) is standard.
Therefore, in the rest of this section, we focus on explaining Steps~\ref{step_recur} and~\ref{step_matrix}.

\subsection{Step~\ref{step_recur}: Constructing the Recurrence Relation}
\label{sec_recur_rel}

We will first explain the recurrence relation construction by example and then give the general construction.

\subsubsection{By example:}
 We skip the first step of the method and assume that we
 have a product  $P_1 = M\abs{}f_1$. In particular, we consider the
 automaton of Fig.~\ref{fig:rec-P}.

\begin{figure}[h]
    \centering
    \begin{center}
\begin{tikzpicture}[scale=\tiksize]
\tikzstyle{every node}+=[inner sep=0pt]
\draw [black] (15.3,-29.2) circle (3);
\draw (15.3,-29.2) node {$s_0$};
\draw [black] (27.5,-29.2) circle (3);
\draw (27.5,-29.2) node {$s_1$};
\draw [black] (27.5,-29.2) circle (2.4);
\draw [black] (6.6,-29.2) -- (12.3,-29.2);
\draw (6.1,-29.2) node [left] {$P\mbox{ }:=$};
\fill [black] (12.3,-29.2) -- (11.5,-28.7) -- (11.5,-29.7);
\draw [black] (13.977,-26.52) arc (234:-54:2.25);
\fill [black] (16.62,-26.52) -- (17.5,-26.17) -- (16.69,-25.58);
\draw [black] (17.538,-27.231) arc (120.15965:59.84035:7.687);
\fill [black] (25.26,-27.23) -- (24.82,-26.4) -- (24.32,-27.26);
\draw [black] (25.303,-31.213) arc (-58.88262:-121.11738:7.552);
\fill [black] (17.5,-31.21) -- (17.92,-32.05) -- (18.44,-31.2);
\end{tikzpicture}
\end{center}
    \caption{A toy product $P_1 = M\abs{}f_1$.
    $P_1$ has two states named $s_0$ and $s_1$. $s_0$ is the initial state and $s_1$ is the accepting state.
	The transition labels from $\Sigma$ are not needed and hence are omitted.
}
    \label{fig:rec-P}
\end{figure}
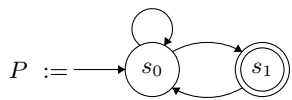

\begin{figure}[h]
    \centering
\begin{center}
\begin{tikzpicture}[scale=\tiksize]
\tikzstyle{every node}+=[inner sep=0pt]
\draw [black] (3,-32.8) circle (3);
\draw (3,-32.8) node {$s_0$};
\draw [black] (13.6,-32.8) circle (3);
\draw (13.6,-32.8) node {$s_1$};
\draw [black] (13.6,-32.8) circle (2.4);
\draw [black] (24.3,-32.8) circle (3);
\draw (24.3,-32.8) node {$s_0$};
\draw [black] (34.9,-32.8) circle (3);
\draw (34.9,-32.8) node {$s_0$};
\draw [black] (55.7,-32.8) circle (3);
\draw (55.7,-32.8) node {$s_0$};
\draw [black] (66.3,-32.8) circle (3);
\draw (66.3,-32.8) node {$s_1$};
\draw [black] (66.3,-32.8) circle (2.4);
\draw [black] (76.9,-32.8) circle (3);
\draw (76.9,-32.8) node {$s_0$};
\draw [black] (8.2,-25.3) circle (3);
\draw (8.2,-25.3) node {$s_0$};
\draw [black] (24.3,-25.3) circle (3);
\draw (24.3,-25.3) node {$s_1$};
\draw [black] (24.3,-25.3) circle (2.4);
\draw [black] (40,-25.3) circle (3);
\draw (40,-25.3) node {$s_0$};
\draw [black] (61,-25.3) circle (3);
\draw (61,-25.3) node {$s_0$};
\draw [black] (76.9,-25.3) circle (3);
\draw (76.9,-25.3) node {$s_1$};
\draw [black] (76.9,-25.3) circle (2.4);
\draw [black] (16.2,-17.7) circle (3);
\draw (16.2,-17.7) node {$s_0$};
\draw [black] (40,-17.7) circle (3);
\draw (40,-17.7) node {$s_1$};
\draw [black] (40,-17.7) circle (2.4);
\draw [black] (68.9,-17.7) circle (3);
\draw (68.9,-17.7) node {$s_0$};
\draw [black] (28.1,-9.9) circle (3);
\draw (28.1,-9.9) node {$s_0$};
\draw [black] (68.9,-9.9) circle (3);
\draw (68.9,-9.9) node {$s_1$};
\draw [black] (68.9,-9.9) circle (2.4);
\draw [black] (48.3,-3.1) circle (3);
\draw (48.3,-3.1) node {$s_0$};
\draw [black] (45.2,-32.8) circle (3);
\draw (45.2,-32.8) node {$s_1$};
\draw [black] (45.2,-32.8) circle (2.4);
\draw [black] (6.49,-27.77) -- (4.71,-30.33);
\fill [black] (4.71,-30.33) -- (5.58,-29.96) -- (4.75,-29.39);
\draw [black] (9.95,-27.73) -- (11.85,-30.37);
\fill [black] (11.85,-30.37) -- (11.79,-29.42) -- (10.97,-30.01);
\draw [black] (24.3,-28.3) -- (24.3,-29.8);
\fill [black] (24.3,-29.8) -- (24.8,-29) -- (23.8,-29);
\draw [black] (38.31,-27.78) -- (36.59,-30.32);
\fill [black] (36.59,-30.32) -- (37.45,-29.94) -- (36.62,-29.38);
\draw [black] (59.27,-27.75) -- (57.43,-30.35);
\fill [black] (57.43,-30.35) -- (58.3,-29.99) -- (57.48,-29.41);
\draw [black] (62.73,-27.75) -- (64.57,-30.35);
\fill [black] (64.57,-30.35) -- (64.52,-29.41) -- (63.7,-29.99);
\draw [black] (76.9,-28.3) -- (76.9,-29.8);
\fill [black] (76.9,-29.8) -- (77.4,-29) -- (76.4,-29);
\draw [black] (14.03,-19.77) -- (10.37,-23.23);
\fill [black] (10.37,-23.23) -- (11.3,-23.05) -- (10.61,-22.32);
\draw [black] (18.39,-19.75) -- (22.11,-23.25);
\fill [black] (22.11,-23.25) -- (21.87,-22.34) -- (21.19,-23.06);
\draw [black] (40,-20.7) -- (40,-22.3);
\fill [black] (40,-22.3) -- (40.5,-21.5) -- (39.5,-21.5);
\draw [black] (66.74,-19.78) -- (63.16,-23.22);
\fill [black] (63.16,-23.22) -- (64.09,-23.03) -- (63.39,-22.31);
\draw [black] (71.07,-19.77) -- (74.73,-23.23);
\fill [black] (74.73,-23.23) -- (74.49,-22.32) -- (73.8,-23.05);
\draw [black] (25.59,-11.54) -- (18.71,-16.06);
\fill [black] (18.71,-16.06) -- (19.65,-16.04) -- (19.1,-15.2);
\draw [black] (30.61,-11.54) -- (37.49,-16.06);
\fill [black] (37.49,-16.06) -- (37.1,-15.2) -- (36.55,-16.04);
\draw [black] (45.46,-4.06) -- (30.94,-8.94);
\fill [black] (30.94,-8.94) -- (31.86,-9.16) -- (31.54,-8.21);
\draw [black] (51.15,-4.04) -- (66.05,-8.96);
\fill [black] (66.05,-8.96) -- (65.45,-8.23) -- (65.13,-9.18);
\draw [black] (68.9,-12.9) -- (68.9,-14.7);
\fill [black] (68.9,-14.7) -- (69.4,-13.9) -- (68.4,-13.9);
\draw [black] (41.71,-27.77) -- (43.49,-30.33);
\fill [black] (43.49,-30.33) -- (43.45,-29.39) -- (42.62,-29.96);
\draw [black] (48.3,1.5) -- (48.3,0);
\fill [black] (48.3,0) -- (48.8,0.8) -- (47.8,0.8);
\draw (48.3,3.7) node {$\textit{Tree}(P)$};
\draw (-4,3.7) node {$\underline{n}$};
\draw (-4,-3.1) node {$0$};
\draw (-4,-9.9) node {$1$};
\draw (-4,-17.7) node {$2$};
\draw (-4,-25.3) node {$3$};
\draw (-4,-32.8) node {$4$};
\draw [dotted] (-1.5,5) -- (-1.5,-34);
\draw [dotted] (-1.1,-3.1) -- (45.3,-3.1);
\draw [dotted] (-1.1,-9.9) -- (25.1,-9.9);
\draw [dotted] (-1.1,-17.7) -- (13.2,-17.7);
\draw [dotted] (-1.1,-25.3) -- (5.2,-25.3);
\draw [dotted] (-1.1,-32.8) -- (0,-32.8);
\end{tikzpicture}
\end{center}
    \caption{Partial unfolding of the automaton of Fig.~\ref{fig:rec-P} into a tree up to depth 4. The column
    labeled $n$ denotes the number of transitions taken.}
    \label{fig:rec-P-tree}
\end{figure}
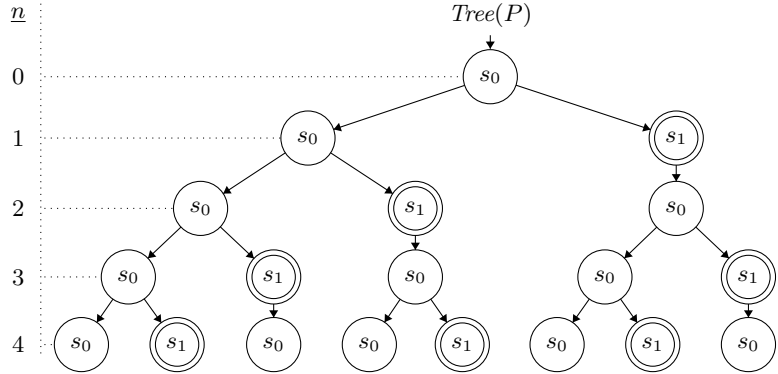

From the automaton of Fig.~\ref{fig:rec-P} we extract the following recurrence relations:
\begin{align}
\beta_{n+1}^{s_0} &= \beta_n^{s_0} + \beta_n^{s_1},
&& \beta_0^{s_0} = 1
\label{eqn:rec0}\\
\beta_{n+1}^{s_1} &= \beta_n^{s_0},
&& \beta_0^{s_1} = 0
\label{eqn:rec1}\\
\alpha_{n+1}^{s_0} &= \alpha_n^{s_0} + \alpha_n^{s_1},
&& \alpha^{s_0}_0 = 0
\label{eqn:rec2}\\
\alpha_{n+1}^{s_1} &= \alpha_n^{s_0} + \beta_n^{s_0},
&& \alpha^{s_1}_0 = 0
\label{eqn:rec3}\\
\alpha_{n}
&= \alpha_n^{s_0} + \alpha_n^{s_1},
&& \alpha_{\emptyset} = 0
\label{eqn:rec4}
\end{align}
where:
\begin{itemize}
\item $\beta_n^q$ is the total number of $n$-length paths through $P_1$ ending in state $q$,
    e.g., $\beta_0^{s_0} =  1$, $\beta_0^{s_1} =  0$, $\beta_3^{s_0} =  3$, $\beta_4^{s_1} =  3$.
	We encourage the reader to refer to Fig.~\ref{fig:rec-P-tree} and convince themselves that these examples hold.

\item $\alpha_n^q$ is the total number of accepting states visited along all
    $n$-length paths through $P_1$ restricted to paths terminating in state $q$,
    e.g., $\alpha_1^{s_0} = 0$, $\alpha_1^{s_1} = 1$, $\alpha_3^{s_0} = 2$.
\item $\alpha_n$
is the total number of accepting states visited along all
    $n$-length paths through $P_1$,
    e.g., $\alpha_0 = 0$, $\alpha_1 = 1$, $\alpha_2 = 2$, $\alpha_3 = 5$,
    $\alpha_4 = 10$.
\item $\alpha_\emptyset$ is a dummy variable representing the initial condition of $\alpha_n$. Notice that the $\alpha_n$ term of the recurrence is unique in that no other term depends on it.
\end{itemize}

\noindent
We determine each equation of the example recurrence relation as follows:

	Equations (\ref{eqn:rec0}) capture the number of paths of a certain
    length ending in state $s_0$. The initial value $\beta_0^{s_0}$ is $1$ because
    $s_0$ is an initial state. Otherwise, notice that $s_0$ has two
    predecessors: $s_0$ and $s_1$. To walk an $(n+1)$-length path ending in
    $s_0$, it is necessary and sufficient to walk an $n$-length path to one of
    its predessors and then take one more step. Hence, we compute
    $\beta_{n+1}^{s_0}$ as the sum of
    $\beta_{n}^{s_0}$ and $\beta_{n}^{s_1}$.
    Analogous reasoning yields Equations (\ref{eqn:rec1}); notice the initial
    value $\beta_0^{s_1}$ is 0 since $s_1$ is not an initial state.

	Equations (\ref{eqn:rec2}) capture the number of accepting states
    visited along all paths of a certain length ending in state $s_0$.
    Importantly, $s_0$ is not an accepting state. Therefore, adding it to an
    $n$-length path will not change the number of accepting states visited
    along that path. Hence, as with $\beta$, we can compute
    $\alpha_{n+1}^{s_0}$ as the sum of $\alpha_{n}^{s_0}$ and
    $\alpha_{n}^{s_1}$. The initial value $\alpha_0^{s_0}$ is $0$ because $s_0$ is
    an initial state, but not an accepting state.

	Equations (\ref{eqn:rec3}) capture the number of accepting states
    visited along all paths of a certain length ending in state $s_1$. Unlike
    $s_0$, the state $s_1$ is an accepting state. Therefore, the $(n+1)$th step
    contributes to the number of accepting states visited, in particular for
    each path it will increase the count by one. There are $\beta_{n}^{s_0}$
    such paths, hence the inclusion of that term in addition to the $\alpha$ of
    the predecessor $s_0$. The initial value $\alpha_0^{s_1}$ is $0$ because $s_1$
    is an accepting state, but not an initial state.

	Equations (\ref{eqn:rec4}) capture the accepting states along all
    paths of a certain length. The initial value $\alpha_\emptyset$ is
    irrelevant; we use 0 for simplicity. Otherwise, this equation merely
    captures the fact that we can partition the paths of length $n$ based on
    which state they end in and take a sum over that partition to compute a
    value over all paths.

We can represent these recurrence relation as a matrix-vector pair $(\xi_1,v_1)$, where:
\begin{equation*}
v_1 =
\begin{bmatrix}
\alpha_{\emptyset} \\ \alpha_0^{s_0} \\ \alpha_0^{s_1} \\ \beta_0^{s_0} \\ \beta_0^{s_1}
\end{bmatrix}
=
\begin{bmatrix}
0 \\ 0 \\ 1 \\ 0 \\ 0
\end{bmatrix}
\ \mbox{ and } \
\xi_1 =
\begin{bmatrix}
0 & 1 & 1 & 0 & 0 \\
0 & 1 & 1 & 0 & 0 \\
0 & 1 & 0 & 1 & 0 \\
0 & 0 & 0 & 1 & 1\\
0 & 0 & 0 & 1 & 0
\end{bmatrix}
\end{equation*}
E.g. row 1 of $\xi_1$ indicates which terms are required to compute $\alpha_n$.

\subsubsection{In general:}
The key
to  generalizing the above method is the set of predecessors for each state and how each
term should be computed using the predecessor terms. Not shown in this example
is the case where a state $q$ is both an initial state and an accepting state. In
that case $\alpha_0^q$ is $1$.
Also there is at most one transition between two
states in this example. In general, there may be multiple transitions between
two states (with different labels). In that case, the equations will include
factors in front of the $\alpha$ and $\beta$ terms. In particular,
$$\beta_{n+1}^{q'} = \sum_{q\in Q} t_{q, q'}\cdot\beta_{n}^q$$
where $t_{q, q'}$ is the number of transition labels that transition from $q$ to $q'$ (Note: $t_{q, q'}$ is 0 if $q$ is not a predessor of $q'$). Likewise:
$$\alpha_{n+1}^{q'} =
\sum_{q\in Q} (t_{q, q'}\cdot\alpha_{n}^q) + (t^*_{q, q'}\cdot\beta_{n}^q)$$
where $t^*_{q, q'}$ is $t_{q, q'}$ when $q'$ is an accepting state and 0 otherwise.

Now we explain the recurrence relation extraction algorithm in general. Let $P = M\abs{}f$ be the synchronous product of some finite LTS $M$ and some DFA $f$.
We explain how to extract both the recurrence matrix $\xi$ and the initial condition vector $v$ from $P$.

In what follows, we assume that $P$ has $N$ states indexed by the set $\{1,...,N\}$.
We first define a matrix that encodes the transition relation of $P$:
\begin{definition}
	We define the $N\times N$ {\em predecessor matrix}, denoted \D, by its entries. We denote the entry in the $i$th row and $j$th
    column as $\D_{ij}$. Define
    $\D_{ij}$ to be the number of transitions from state $j$ to state $i$ in $P$.
\end{definition}
Next, we define a matrix that encodes the accepting states of $P$:
\begin{definition}
    We define the $N\times N$ {\em accepting matrix}, denoted \A, so that $\A_{ij} = \D_{ij}$ if state $i$ of $P$ is an accepting state. Otherwise, $\A_{ij} = 0$.
\end{definition}
We are now able to define the recurrence matrix $\xi$:
\begin{definition}
    The {\em recurrence matrix} of $P$ is the $(2N+1)\times(2N+1)$ matrix
    $$
    \xi =
    \begin{bmatrix}
        0       & \hat{1} & \hat{0} \\
        \hat{0} & \D      & \A\\
        \hat{0} & \mathbf{0} & \D
    \end{bmatrix}
    $$
    where $\hat{0}$ and $\hat{1}$ are $n$-dimensional vectors of 0's and 1's respectively and where $\mathbf{0}$ is an $n\times n$ matrix of 0's.
\end{definition}

We now explain how to extract the initial condition vector $v$ from $P$. We first introduce some notation.
    For convenience, we vectorize the $\alpha_n^q$ and $\beta_n^q$ terms.
	Let
    $\valpha n := (\alpha_n^1,...,\alpha_n^N)^T$
    and
    $\vbeta n := (\beta_n^1,...,\beta_n^N)^T$.
Then, the two vectors $\valpha 0$ and $\vbeta 0$ capture the initial conditions of terms
$\alpha_n^i$ and $\beta_n^i$ in the recurrence relation, and we can construct the $2N+1$ dimensional vector $v$ by combining
$\valpha 0$ and $\vbeta 0$ along with $\alpha_\emptyset = 0$, namely, $v := (\alpha_\emptyset,\valpha 0, \vbeta 0)^T$.

The vectors $\valpha 0$ and $\vbeta 0$
 are extracted from $P$ as follows:

    (1) The $i$th entry of $\valpha 0$ is 1 if and only if  state   $i$ of $P$ is both an accepting state and an initial state. Otherwise, that entry of $\valpha 0$ is 0.
    (2) The $i$th entry of $\vbeta 0$ is 1 if and only if state $i$ of $P$ is an initial state. Otherwise, that entry of $\vbeta 0$ is 0.

The following two statements
(proven in Appendix~\ref{sec_pf-correct})
capture the correctness of our reduction.

\begin{theorem}
    \label{thm:main-correctness}
    Let $\alpha$ and $\beta$ be the recurrence relation terms for the product $M\abs{}f_i$, as constructed above.
    Then for all $n\geq 0$,
    $\xi_i^{n+1}v_i =
    \begin{bmatrix}
        \alpha_n \\ \valpha {n+1} \\ \vbeta {n+1}
    \end{bmatrix}
    $. And hence $(\xi_i^{n+1}v_i)_0 = \alpha_n = \Xsum{\img{f}{M_n}}{i}$.
\end{theorem}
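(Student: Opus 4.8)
The plan is to prove the matrix identity by induction on $n$, and then to derive the two corollary equalities from the semantics of the recurrence terms. Write $\xi = \xi_i$, $v = v_i$, $\D$, $\A$ for the predecessor and accepting matrices of $P = M\abs{}f_i$, and recall $v = (\alpha_\emptyset, \valpha 0, \vbeta 0)^T$ with $\alpha_\emptyset = 0$. The block structure of $\xi$ means that for any vector $(c, x, y)^T$ we have $\xi (c,x,y)^T = (\hat 1\cdot x + \hat 0\cdot y,\ \D x + \A y,\ \D y)^T = (\sum_j x_j,\ \D x + \A y,\ \D y)^T$. So the single $\xi$-step implements exactly: the scalar slot becomes the sum of the $\valpha{}$-block, the $\valpha{}$-block updates via $\D,\A$ acting on the previous $\valpha{}$ and $\vbeta{}$ blocks, and the $\vbeta{}$-block updates via $\D$ alone. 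This is precisely the vectorized form of Equations (\ref{eqn:rec0})--(\ref{eqn:rec4}): $\vbeta{n+1} = \D\,\vbeta n$ encodes $\beta_{n+1}^{q'} = \sum_q t_{q,q'}\beta_n^q$; $\valpha{n+1} = \D\,\valpha n + \A\,\vbeta n$ encodes $\alpha_{n+1}^{q'} = \sum_q (t_{q,q'}\alpha_n^q + t^*_{q,q'}\beta_n^q)$ since the $i$th row of $\A$ is the $i$th row of $\D$ when $i$ is accepting and zero otherwise; and the scalar slot records $\sum_j \alpha_n^j = \alpha_n$ by Equation (\ref{eqn:rec4}).

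First I would establish the base case $n = 0$: compute $\xi v = \xi(0, \valpha 0, \vbeta 0)^T = (\sum_j \valpha 0_j,\ \D\,\valpha 0 + \A\,\vbeta 0,\ \D\,\vbeta 0)^T$, and check termwise that $\sum_j \valpha 0_j = \alpha_0$ (the number of accepting states visited on the length-$0$ paths, i.e.\ the number of states that are both initial and accepting, which is exactly $\sum_j \valpha 0_j$ by the extraction rule for $\valpha 0$), that $\D\,\valpha 0 + \A\,\vbeta 0 = \valpha 1$, and $\D\,\vbeta 0 = \vbeta 1$, the latter two being instances of the recurrence at $n = 0$. For the inductive step, assume $\xi^{n+1}v = (\alpha_n, \valpha{n+1}, \vbeta{n+1})^T$; then $\xi^{n+2}v = \xi(\alpha_n, \valpha{n+1}, \vbeta{n+1})^T = (\sum_j \valpha{n+1}_j,\ \D\,\valpha{n+1} + \A\,\vbeta{n+1},\ \D\,\vbeta{n+1})^T = (\alpha_{n+1}, \valpha{n+2}, \vbeta{n+2})^T$, applying Equation (\ref{eqn:rec4}) to the first component and the vectorized recurrences to the other two. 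Note the scalar slot is "write-only" — nothing reads from it — so it never feeds back and the induction on the $\valpha{},\vbeta{}$ blocks is self-contained.

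It then follows immediately that $(\xi^{n+1}v)_0 = \alpha_n$. For the final equality $\alpha_n = \Xsum{\img{f}{M_n}}{i}$, I would unwind the definitions: $\img{f}{M_n}$ is the multiset of fitness vectors $f(w)$ weighted by the number of partial runs of $M$ producing $w$, so $\Xsum{\img{f}{M_n}}{i} = \sum_{w \in \Sigma^n} (\text{number of runs of } M \text{ on } w)\cdot f_i(w)$. By the definition of $f_i$ via the DFA, $f_i(w)$ counts accepting-state visits along the run of $f_i$ on $w$; and by construction of the product $P = M\abs{}f_i$, the paths of $P$ of length $n$ are in bijection with pairs (run of $M$ on some $w\in\Sigma^n$, run of $f_i$ on that same $w$), with accepting states of $P$ corresponding to accepting states of $f_i$. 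Hence $\sum_{w}(\text{\#runs of }M\text{ on }w)\cdot f_i(w)$ equals the total number of accepting-state visits over all length-$n$ paths of $P$, which is exactly $\alpha_n$.

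The routine part is the block matrix bookkeeping in the induction; the step needing genuine care is the last one — matching $\alpha_n$, a graph-theoretic count over paths of the product automaton, against $\Xsum{\img{f}{M_n}}{i}$, a multiset sum over traces — which rests on the product construction faithfully preserving both run multiplicity (from $M$) and accepting-visit counts (from $f_i$). I expect that correspondence, rather than the linear algebra, to be the main obstacle, and it is where the compatibility/soundness of Step~\ref{step_product} is actually used.
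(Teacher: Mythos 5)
Your proof matches the paper's: the paper also proves the matrix identity by induction on $n$, with the base case computing $\xi v$ directly and the inductive step applying the block structure of $\xi$ together with the vectorized recurrences $\vbeta{n+1}=\D\vbeta n$ and $\valpha{n+1}=\D\valpha n+\A\vbeta n$ (stated there as Lemmas~\ref{lemma:beta-D} and~\ref{lemma:alpha-AD}). Your final paragraph unwinding $\alpha_n=\Xsum{\img{f}{M_n}}{i}$ via the path/trace bijection is actually spelled out in more detail than the paper, which asserts that identification in one line inside the proof of Corollary~\ref{cor:main-correctness}; otherwise the two arguments are the same.
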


\begin{corollary}
    \label{cor:main-correctness}
    Let $\xi_i$ and $v_i$ be the recurrence matrices and initial condition vectors for the products $M\abs{}f_i$, for $i=1,...,d$, as constructed above.
	Then
    $$\agg_f(M) = \lim_{n\to\infty}
    h((\xi_1^{n+1}v_1)_0, (\xi_2^{n+1}v_2)_0, ..., (\xi_d^{n+1}v_d)_0)$$
\end{corollary}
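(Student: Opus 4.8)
The plan is to establish Theorem~\ref{thm:main-correctness} by induction on $n$, and then read off Corollary~\ref{cor:main-correctness} by combining it with Definition~\ref{def_fitnessscore} and the compatibility hypothesis of Problem~\ref{prob_}.

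For the theorem I would argue in two stages. \emph{Stage one} checks that the scalar recurrences faithfully count the intended combinatorial quantities on $P_i = M\abs{}f_i$: $\beta_n^q$ is the number of $n$-length paths of $P_i$ ending in $q$, $\alpha_n^q$ is the number of accepting states visited (with multiplicity) along precisely those paths, and $\alpha_n = \sum_q \alpha_n^q$ is the number of accepting visits along all $n$-length paths of $P_i$. This is a short induction on $n$. The base case is exactly the extraction rule for $\valpha{0}$ and $\vbeta{0}$: an initial state of $P_i$ has a unique $0$-length path through it, which visits an accepting state iff that state is accepting, while a non-initial state has no $0$-length path. The step uses that every $(n{+}1)$-length path ending in $q'$ decomposes uniquely into an $n$-length path ending in some predecessor $q$ followed by one of the $t_{q,q'}$ transitions $q\to q'$, and that this last step contributes one extra accepting visit when $q'$ is accepting and none otherwise --- which is what the $t^*_{q,q'}$, equivalently the $\A$-block, records. \emph{Stage two} checks that left-multiplication by $\xi_i$ realizes these recurrences. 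From the block form of $\xi_i$, for any scalar $a$ and $N$-vectors $\vec u,\vec w$ one has $\xi_i\,(a,\vec u,\vec w)^T = \bpn{\hat 1\cdot\vec u,\ \D\vec u + \A\vec w,\ \D\vec w}^T$; since $\D_{q'q}=t_{q,q'}$ and $\A_{q'q}=t^*_{q,q'}$, setting $(\vec u,\vec w)=(\valpha{n},\vbeta{n})$ yields $\bpn{\alpha_n,\ \valpha{n+1},\ \vbeta{n+1}}^T$ (using $\alpha_n=\sum_q\alpha_n^q$). Hence an induction on $n$ gives $\xi_i^{n+1}v_i = (\alpha_n,\valpha{n+1},\vbeta{n+1})^T$: the base case $\xi_i v_i$ is computed directly from $v_i=(0,\valpha{0},\vbeta{0})^T$ (the first column of $\xi_i$ is zero, so the dummy $\alpha_\emptyset$ is immaterial), and the inductive step reuses the same block computation. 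In particular $(\xi_i^{n+1}v_i)_0 = \alpha_n$.

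It remains to identify $\alpha_n$ with $\Xsum{\img{f}{M_n}}{i}$. Unwinding the definitions of the image multiset and of $\textit{sum}$, one obtains $\Xsum{\img{f}{M_n}}{i} = \sum_{w\in\Sigma^n} M_n(w)\, f_i(w)$, where $M_n(w)$ is the number of partial runs of $M$ producing $w$ and $f_i(w)$ is the number of accepting visits of the DFA $f_i$ on $w$. The crucial observation --- the step I expect to require the most care to state precisely --- is that, because each $f_i$ is deterministic, the $n$-length paths of $P_i$ producing a fixed $w$ are in bijection with the partial runs of $M$ producing $w$, and every such product path visits a $P_i$-accepting state exactly $f_i(w)$ times (a product state is accepting iff its $f_i$-component is). Partitioning the $n$-length paths of $P_i$ by the word they spell then gives $\alpha_n = \sum_{w} M_n(w)\,f_i(w) = \Xsum{\img{f}{M_n}}{i}$, completing the theorem.

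For the corollary, Definition~\ref{def_fitnessscore} gives $\agg_f(M) = \lim_{n\to\infty}\agg(\img{f}{M_n})$. Since $h$ and $f$ are compatible, $h$ conditionally represents $\agg$ subject to some predicate $\metaPalt$ with $\metaPalt(\img{f}{M_n})$ true for every $n$, so $\agg(\img{f}{M_n}) = h\bpn{\Xsum{\img{f}{M_n}}{1},\dots,\Xsum{\img{f}{M_n}}{d}}$; substituting $\Xsum{\img{f}{M_n}}{i} = (\xi_i^{n+1}v_i)_0$ from the theorem and taking $n\to\infty$ yields the claimed formula. The only genuine obstacle is the bookkeeping in stage one and in the path/run bijection --- in particular, tracking accepting visits with their multiplicity. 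The matrix algebra and the corollary itself are routine.
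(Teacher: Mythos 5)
Your proposal is correct and follows essentially the same route as the paper: the theorem is proved by induction on $n$ using the block structure of $\xi_i$ (the paper isolates your ``stage one/two'' as two lemmas, $\vbeta{n+1}=\D\vbeta{n}$ and $\valpha{n+1}=\D\valpha{n}+\A\vbeta{n}$), and the corollary then follows by substituting $(\xi_i^{n+1}v_i)_0=\alpha_n=\Xsum{\img{f}{M_n}}{i}$ into the compatible representation $h$ and passing to the limit. The only difference is one of detail: you spell out the identification $\alpha_n=\sum_w M_n(w)f_i(w)$ via the determinism-based bijection between product paths and runs of $M$, which the paper asserts ``by definition''---a worthwhile elaboration, not a divergence.
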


\subsection{Step~\ref{step_matrix}: Matrix Analysis}
\label{sec_matrix_analysis}

Next we will discuss two methods for solving the matrix analysis problem. One of these methods is {\em symbolic} and the other {\em numerical}.
We illustrate them by continuing with the example of Fig.~\ref{fig:rec-P}.
We have constructed $g_1(n) = (\xi_1^{n+1}v_1)_0$. For sake of example, let us assume that $\xi_1 = \xi_2$ and that $v_2 = \xi_1v_1$, so $g_2(n) = g_1(n+1)$.
Let us also assume that $h(g_1(n), g_2(n)) = g_1(n) / g_2(n)$.

\subsubsection{Symbolic Method:}
The first step of the symbolic method is to compute closed-form expressions for each $g_i$. Tools such as Mathematica can do this using Jordan decomposition~\cite{hefferon2020linear}.
We omit the details.
The result is:
{\small
\begin{align*}
    g_1(n) =
        \frac{1}{25\cdot 2^{(1 + n)}}
        \bigg(
        4 \sqrt{5} \negc{1}^n - 4 \sqrt{5} \posc{1}^n
        - 5 \negc{1}^n n + 5 \sqrt{5} \negc{1}^n n
        - 5 \posc{1}^n n - 5 \sqrt{5} \posc{1}^n n\bigg)
\end{align*}}
where $\posc{1} := 1 + \sqrt{5}$ and $\negc{1} := 1 - \sqrt{5}$. As mentioned, $g_2(n) = g_1(n+1)$.

Once we have the closed-form expressions, we can ask Mathematica to solve the limit; it does so easily:
$\lim_{n\to\infty}g_1(n)/g_2(n) = 2/(1 + \sqrt{5})$.
This value may be readily familiar to some as the reciprocal of the golden ratio.
Tools such as Mathematica can solve a broad class of limits using, e.g., Gruntz's method~\cite{gruntz_1996}.

Computing the Jordan decomposition is currently the bottleneck for the symbolic method. Our experiments with Mathematica suggest that it cannot compute the Jordan decomposition for even moderately sized matrices,
the runtime being exponential in the dimension of the matrix.
There have been several recent attempts to improve the state of the art in Jordan decomposition~\cite{10032513,shi2023computation} and we are hopeful that this subproblem will soon be feasible to compute for large matrices.

\subsubsection{Numerical Method:}
In this method, we compute $h(g_1(K), g_2(K))$ for large $K$, which we call a {\em $K$-approximation}.
Although we have not yet established an error bound on the difference between the $K$-approximation and the true value of the limit, the $K$-approximation appears to converge relatively quickly. For instance, in the case of Example~\ref{fig:rec-P}, the $K$-approximation for $K=15$ and $K=20$ are 0.6180344 and 0.6180339 respectively, which do not differ until the seventh decimal place.
Our current approach is to compute the $K$-approximation for, e.g., $K=8192$ and $K=9000$ and determine at which decimal place they differ to establish the precision of the $K$-approximation for $K=9000$. We can also plot intermediate $K$-approximations against $K$.

A naive implementation of $K$-approximation does not scale. Instead, we use the standard
{\em exponentiation by squaring} technique to quickly compute $K$-approximations for large $K$. For example, to compute $M^{11}$ for some matrix $M$, it suffices to compute $M^2, M^4,$ and $M^8$, since
$M^{11} = M\cdot M^2\cdot M^8$. Note that $M^4 = (M^2)^2$ and $M^8 = (M^4)^2$, hence the name {\em exponentiation by squaring}. We need only compute $\log K$ squares and combine them per the binary representation of $K$.
Furthermore, in our implementation, we found that we needed large datatypes (128 bit) to represent the entries of the matrix.
As matrix power for large datatypes appears to not be implemented in the linear algebra library we used (numpy),
we implemented this operation ourselves.

\subsubsection{Comparison:}
The symbolic method gives an exact, symbolic representation of the fitness score, but unfortunately does not yet scale well, as we shall see from the experiments in Section~\ref{sec_casestudy} that follows. The numerical approach on the other hand can compute in seconds an approximation of the fitness score. As we shall show, these approximations are precise enough to distinguish between systems of different fitness.

\section{Case Studies}
\label{sec_casestudy}

We evaluate our framework on three case studies, described in detail in the subsections that follow, and summarized in Table~\ref{case-study-results}.
The symbolic method did not terminate after an hour for the larger two case studies (2PC and ABP) due to limitations imposed by the state of the art in Jordan decomposition
(c.f. Section~\ref{sec_matrix_analysis}).
Therefore, Table~\ref{case-study-results} reports the results obtained by the numerical method.

In each case study we compute the fitness score for different system variants (column $M$).
Column $|M|$ represents the size (total number of states) of the system being measured, which is the product of all distributed processes.
Time refers to the total execution time, in seconds.
Column $\agg_f(M_{8192})$ refers to the $K$-approximation of the fitness score with $K=8192$, and likewise for $K=9000$.
As can be seen, the two approximations are very close within each row (identical up to at least the 3rd decimal point), which indicates convergence.
The reason we report the fitness score for K = 8192 instead of another number, say K = 8000 or K = 8500, is efficiency: 8192 the largest power of two less than 9000, and in order to compute the fitness score for K = 9000 we need to compute it anyway for K = 8192.
Our results can be reproduced using a publicly available artifact, which is structured, documented, and licensed for ease of repurposing \cite{artifactDOI}.

Let us remark that in the 2PC and ABP case studies, the systems being measured were automatically generated by a distributed protocol synthesis tool,
which is an improved version of the tool described in~\cite{ScenariosHVC2014,AlurTripakisSIGACT17}.
As our goal in this paper is fitness evaluation, we omit discussing the synthesis tool.
But, as mentioned in the introduction, evaluation of automatically synthesized systems is a promising application of our framework.

All case studies use the $\avgrate$ aggregate function.
Additionally, we use three variations of the fitness function in Fig.~\ref{fig:protocol-fitness}. This parametric fitness function suggests the possibility of constructing a library of general, reusable fitness functions. Although it was straightforward to construct fitness functions for our purposes, this library would further reduce that burden for users.

In the rest of this section we provide further details on each case study.
Some supporting figures and intermediate results are provided in
Appendix~\ref{sec_cstudy-overflow}.


\begin{table}[]
    \centering
\begin{tabular}{|l|l|r|r|r|r|}
\hline
case study &
  $M$ &
  $\abs{M}$ &
  total time (sec.) &
  $\agg_f(M_{8192})$ &
  $\agg_f(M_{9000})$ \\
\hline
simple comm. & good & 3   & 0.0052 & 0.249970  & 0.249972 \\
simple comm. & bad  & 5   & 0.006  & 0.138165 & 0.138168 \\
2PC          & H    & 58  & 0.41   & 0.0833   & 0.0832   \\
2PC          & A1   & 30  & 0.25   & 0.07856  & 0.07857  \\
2PC          & A2   & 25  & 0.1    & 0.0833   & 0.0832   \\
ABP          & HH   & 144 & 9.1    & 0.016864 & 0.016859 \\
ABP          & HA   & 144 & 8.6    & 0.015435 & 0.015430  \\
ABP          & AH   & 144 & 8.7    & 0.015218 & 0.015212 \\
ABP          & AA   & 144 & 8.6    & 0.01391  & 0.01390 \\
\hline
\end{tabular}
\caption{A summary of the numerical method results of the three case studies.
}
\label{case-study-results}
\end{table}

\subsection{Case Study \#1: Simple Communication Protocol}
This section treats the communication protocol presented in Example~\ref{ex_protocol}.
We instantiate the framework to measure the average rate at which send-ack sequences are executed and apply this instance of the framework to $M$ and $M'$ (Fig.~\ref{fig:protocol}).
The python representations of all simple communication protocol processes and fitness functions are available in \texttt{toy\_automata.py} of the artifact~\cite{artifactDOI}.

Recall that $\Sigma=\{s,t,a\}$.
Let $f_1(w) :=$ `the number of send-ack sequences of the form $st^*a$ in $w$'.
For instance (brackets $[$ and $]$ added for emphasis),
$f_1(aat [sa] [sta]a s[stta] stt[sa]) = 4$.
Additionally, let $f_2(w) := \abs{w}$ (the length of $w$) and let the fitness function be $f := \angles{f_1,f_2}$. The functions $f_1,f_2$ can be represented as the DFA shown in Fig.~\ref{fig:protocol-fitness}, with $L = \{s\}$ and $R = \{a\}$.
This fitness function is measuring the number of send-ack sequences per unit of discrete time, which is analogous to the traditional measure of throughput in distributed systems.

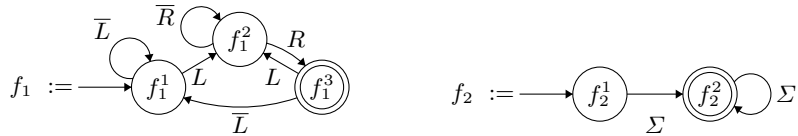
\begin{figure}
    \begin{subfigure}[b]{0.475\textwidth}
        \centering
\begin{center}
\begin{tikzpicture}[scale=\tiksize]
\tikzstyle{every node}+=[inner sep=0pt]
\draw [black] (31,-28.1) circle (3);
\draw (31,-28.1) node {$f_1^1$};
\draw [black] (40,-22.8) circle (3);
\draw (40,-22.8) node {$f_1^2$};
\draw [black] (49.1,-28.1) circle (3);
\draw (49.1,-28.1) node {$f_1^3$};
\draw [black] (49.1,-28.1) circle (2.4);
\draw [black] (33.59,-26.58) -- (37.41,-24.32);
\fill [black] (37.41,-24.32) -- (36.47,-24.3) -- (36.98,-25.16);
\draw (35.5,-25.95) node [below] {$L$};
\draw [black] (37.044,-23.238) arc (306.1666:18.1666:2.25);
\draw (32.81,-20.06) node [left] {$\overline{R}$};
\fill [black] (37.85,-20.72) -- (37.79,-19.78) -- (36.98,-20.37);
\draw [black] (42.961,-23.21) arc (73.78871:45.77674:10.33);
\fill [black] (47.28,-25.73) -- (47.06,-24.81) -- (46.36,-25.53);
\draw (46.22,-23.7) node [above] {$R$};
\draw [black] (28.176,-27.123) arc (278.64042:-9.35958:2.25);
\draw (24.76,-22.82) node [above] {$\overline{L}$};
\fill [black] (30.06,-25.26) -- (30.43,-24.4) -- (29.44,-24.55);
\draw [black] (22.1,-28.1) -- (28,-28.1);
\draw (21.6,-28.1) node [left] {$f_1\mbox{ }:=$};
\fill [black] (28,-28.1) -- (27.2,-27.6) -- (27.2,-28.6);
\draw [black] (46.51,-26.59) -- (42.59,-24.31);
\fill [black] (42.59,-24.31) -- (43.03,-25.14) -- (43.54,-24.28);
\draw (43.66,-25.95) node [below] {$L$};
\draw [black] (46.342,-29.273) arc (-71.32758:-108.67242:19.653);
\fill [black] (33.76,-29.27) -- (34.36,-30) -- (34.68,-29.06);
\draw (40.05,-30.81) node [below] {$\overline{L}$};
\end{tikzpicture}
\end{center}
    \end{subfigure}
    \begin{subfigure}[b]{0.475\textwidth}
        \centering
\begin{center}
\begin{tikzpicture}[scale=\tiksize]
\tikzstyle{every node}+=[inner sep=0pt]
\draw [black] (31,-28.1) circle (3);
\draw (31,-28.1) node {$f_2^1$};
\draw [black] (43.2,-28.1) circle (3);
\draw (43.2,-28.1) node {$f_2^2$};
\draw [black] (43.2,-28.1) circle (2.4);
\draw [black] (22.1,-28.1) -- (28,-28.1);
\draw (21.6,-28.1) node [left] {$f_2\mbox{ }:=$};
\fill [black] (28,-28.1) -- (27.2,-27.6) -- (27.2,-28.6);
\draw [black] (34,-28.1) -- (40.2,-28.1);
\fill [black] (40.2,-28.1) -- (39.4,-27.6) -- (39.4,-28.6);
\draw (37.1,-30.6) node [below] {$\Sigma$};
\draw [black] (45.88,-26.777) arc (144:-144:2.25);
\draw (50.45,-28.1) node [right] {$\Sigma$};
\fill [black] (45.88,-29.42) -- (46.23,-30.3) -- (46.82,-29.49);
\end{tikzpicture}
\end{center}
    \end{subfigure}
    \caption{The DFA representations of $f_1$ and $f_2$ for the case studies, parameterized by the set of labels $\Sigma$, as well as a set of {\em left endpoints} $L\subseteq\Sigma$ and {\em right endpoints} $R\subseteq\Sigma$. $\overline{L} = \Sigma\setminus L$ and likewise for $\overline{R}$.}
    \label{fig:protocol-fitness}
\end{figure}

As reported in Table~\ref{case-study-results}, the system that uses the good receiver has a fitness score of about $0.25$ and the system using the bad receiver a score of about $0.138$. These scores are {\em interpretable} in that they have units: send-ack sequences per unit of discrete time.
Hence, the framework deems the good receiver as more fit and this determination aligns with our intuitions.
Because this example is relatively small, Mathematica was able to compute the exact fitness scores of these systems. The system that uses the good receiver has a fitness score of exactly $1/4$ (obtained after 34 seconds) and the system that uses the bad receiver has a score of exactly $\frac{5-\sqrt{5}}{20}\approx 0.138$ (obtained after 563 seconds).

\subsection{Case Study \#2: Two Phase Commit (2PC)}
Two phase commit (2PC) is a protocol for making transactional changes to a distributed database atomically; if one sub-operation of the transaction is aborted at one remote database, so too must the sub-operations at all other remote databases.
Although each iteration of 2PC is terminating, it is typical to assume there will be infinitely many such iterations, and our model reflects this.
In our model of 2PC, a user initiates a transaction by synchronizing with a {\em transaction manager} on the label $x$. The transaction is complete when the transaction manager synchronizes with the user on label \textit{fail} or \textit{succ}. We omit the details of the intermediate exchanges between the transaction manager and database managers.
The python representations of all 2PC processes and fitness functions are available in \texttt{\_2pc\_automata.py} of the artifact~\cite{artifactDOI}.

The fitness function for this case study is as depicted in Fig.~\ref{fig:protocol-fitness}, with $L = \{x\}$, $R = \{\textit{fail},\textit{succ}\}$, and $\Sigma$ has a total of 18 labels. This fitness function measures the rate at which transactions are initiated and then completed.

We study three 2PC implementations, each using a different transaction manager LTS.
The system labeled H in Table~\ref{case-study-results} uses a previously manually constructed transaction manager that the synthesis tool was also able to discover automatically, while the systems labeled A1 and A2 use new transaction managers generated by the synthesis tool.
The automatically generated transaction managers  have 12 states each and it is therefore hard to tell at a glance which will give rise to the most efficient protocol.
Our tool automatically reports, in fractions of a second, a fitness score of about $0.083$ for both systems
 H and A2, and a score of about $0.079$ for system A1.
These fitness scores have units: transactions per unit time.
Hence, in the same amount of time, A1 completes about 5\% fewer transactions than H or A2.

\subsection{Case Study \#3: Alternating Bit Protocol (ABP)}
The Alternating Bit Protocol (ABP) allows reliable communication over an unreliable network. As with the prior two case studies, we use the fitness function depicted in Fig.~\ref{fig:protocol-fitness}, except with $L = \{\textit{send}\}, R = \{\textit{done}\}$, and $\Sigma$ of size 12. Similar to case study \#1 we are measuring the rate of send-done sequences.
The python representations of all ABP processes and fitness functions are available in \texttt{abp\_automata.py} of the artifact~\cite{artifactDOI}.

In~\cite{AlurTripakisSIGACT17}, the authors present a method to automatically synthesize (distributed) ABP sender and receiver processes. Here, we evaluate the fitness of the ABP variants that use these various synthesized processes.
Together the synthesized sender and receiver processes have 14 states, which again makes manual determinations about the fitness very challenging---even more so due to the distributed nature of the problem. It is no longer necessarily a question of which sender or receiver is better than the other sender or receiver, but a question of which combination of sender and receiver is best.
Once again, our framework allows to automatically make this determination in a matter of seconds.

The systems are ranked by fitness in the following order: HH, HA, AH, AA.
H stands for human-designed (and then also rediscovered during synthesis) and A stands for newly discovered during synthesis.
In this case study, the newly discovered processes do worse than the manually constructed processes.
The difference in fitness scores is meaningful: in the same amount of time, AA will complete about 18\% fewer sequences on average. AH and HA will both complete about 8.5\% fewer sequences than HH.

\section{Related Work}
\label{sec_related}

Our work is broadly related to the field of performance analysis and evaluation.
Mathematical models typically used there include Markov Chains, Markov Decision Processes, Markov Automata, queueing models, Petri nets, timed
or hybrid automata, etc., e.g., see~\cite{Baier2008PMC,1231550,1556551,CassandrasLafortune2021,DBLP:conf/date/FakihGFR13,10.1007/978-3-642-24310-3_1,DBLP:journals/fmsd/KwiatkowskaNPS06,7774639}.
Our approach differs as our mathematical framework uses neither timed nor probabilistic models such as the ones above.
Because we do not use stochastic models, our work is also different from the work on probabilistic verification, e.g., see~\cite{BaierCACM2010,Baier2008PMC,handbookMC_Baier2018,cauchi2018efficient,10.1007/978-3-319-10696-0_31}.
Our work also differs from performance analysis approaches that use max-plus algebra based frameworks such as the
real-time calculus, e.g., see~\cite{6728887,LU2012298,Thiele00RTC,DBLP:reference/crc/WandelerT05}.

Our work is also related to non-boolean interpretations of temporal semantics, such as the 5-valued robust temporal logic rLTL~\cite{DBLP:journals/tocl/AnevlavisPNT22,DBLP:conf/csl/TabuadaN16}. However, our motivation is performance comparisons rather than robustness.
Our framework also differs from that of signal temporal logic (STL)~\cite{8353490,10.1007/978-3-030-99524-9_15,a15040126,DBLP:journals/sttt/NickovicLMFU20,8603214,puranic2021learning,SALAMATI2021109781},
which is valued over real-time traces. Our framework is over discrete traces, although there have been recent STL extensions which handle both real and discrete time~\cite{10.1007/978-3-030-29662-9_4}.
In addition, our framework is parameterized by generic quantitative concepts (the fitness and aggregate functions and the comparison relation) that are
present neither in rLTL nor in STL or its variants.

Our work is closely related to the field of quantitative verification, synthesis, and games, e.g.,
see~\cite{DBLP:journals/tcs/AlfaroFHMS05,almagor2018equilibria,10.1007/978-3-319-11936-6_6,brihaye_et_al:LIPIcs:2015:5622,DBLP:conf/cav/CernyCHRS11,ChatterjeeDH10,DBLP:conf/qest/ChatterjeeAFHMS06,DBLP:journals/ife/Henzinger13}.
Typically, these works assign values to {\em weighted automata}.
These automata blend in a single model both the description of the system and the description of any performance or fitness functions associated with the system. In comparison, our framework decouples the description of the system (e.g., a plain LTS without any weights) from the description of the fitness function (e.g., a DFA). Our semantical framework is also very general and can handle multi-dimensional fitness functions and arbitrary aggregate functions, not just $\sup$, which is the only aggregate supported by these works.

Sensing cost, described in~\cite{almagor_et_al:LIPIcs:2014:4840}, measures how many signals each state of a system needs to observe in order to make a decision. The sensing cost of a run is the average sensing cost of the states visited along that run. The sensing cost of a system is the expected sensing cost along all runs. Finally, the sensing cost of a language is the minimal sensing cost across all automata that accept that language. It seems the primary focus of this work is to establish a complexity measure for languages, but it can certainly be used to compare two systems.
Sensing cost can be viewed as a particular fitness criteria, but it is a syntactic metric, whereas our framework considers semantic metrics. Sensing cost is syntactic in the sense that it is computed over runs of states rather than runs of transition symbols and it primarily uses quantities that are captured statically from the transition function.
Finally, sensing cost is measured solely with respect to input symbols and thus in some sense only measures how well a system can cope with the environment. On the other hand, our framework makes measurements over all symbols and can therefore yield results about e.g. the rate at which the system does a good thing.

Propositional quality, presented in \cite{DBLP:conf/icalp/AlmagorBK13}, is another way to measure the fitness of a system. Like our work, the framework used here is parameterized by arbitrary functions. Unlike our work that uses DFA's to specify fitness criteria, the authors formalize what they call quality using a quantitative variant of LTL. The emphasis of their paper is that this variant of LTL has computational problems that are analogous to those of traditional LTL and that these problems can be solved by natural extensions of non-quantitative algorithms without much if any additional run-time complexity overhead. There is no obvious reduction between our framework and propositional quality because the arbitrary functions introduced by the latter can only consider sub-traces of a fixed size and they do not take any limits, sup, inf, etc over this size parameter. In particular, it isn't obvious how propositional quality could express average throughput of a trace in the limit as we do for our case studies. Conversely, there is no obvious reduction of their work to our treatment of LTS with DFA fitness criteria, namely because their logic formulas induce a sort of recursive computation that can never be captured by a DFA. Their focus is on worst-case behavior whereas our focus has been on average-case behavior.

\section{Conclusions and Future Work}
\label{sec_concl}

We proposed a formal framework that assigns {\em fitness scores} to systems modeled as finite LTSs. The main novelty of our framework is that it {\em decouples} the description of the system from the set of domain-specific parameters such as fitness and aggregate functions, which determine the final fitness score. Furthermore, the user defines these fitness scores and aggregate functions over partial runs, which are easier for the user to reason about---our framework does the heavy lifting of extending this reasoning to infinite traces. This decoupling and finite reasoning make our framework more useable and its results more {\em interpretable}. Indeed, in all of our case studies the scores are not merely numbers; they have meaningful units, e.g., send-ack sequences per unit of time.

We used our framework to evaluate the automatically synthesized ABP protocols presented in~\cite{AlurTripakisSIGACT17}
as well as our own automatically synthesized 2PC protocols.
We showed that some of these protocols are better than others.
Inspired by this application, we plan to investigate the use of our framework in protocol synthesis, specifically in synthesizing protocols that not only satisfy a given correctness specification but are also {\em optimal} with respect to a fitness score.

We are also actively exploring ways to improve the scalability of the symbolic method. In particular, we may be able to feasibly compute a simplified version of the the recurrence matrix $\xi_i$ without sacrificing the accuracy of the final computed limit.
Additionally, we would like to generalize our method to aggregates like $\min/\max$, which do not have conditional representations, and to systems that cannot be represented as finite labeled transition systems.
We suspect that best/worst-case analysis reduces to the minimal cost-to-time ratio problem~\cite{Lawler1972}, but in general aggregates with no conditional representation may be more challenging.

\subsubsection{Acknowledgements}
Derek Egolf's research has been initially supported by a Northeastern University PhD fellowship.
This material is based upon work supported by the National Science Foundation Graduate Research Fellowship under Grant No. (1938052).
Any opinion, findings, and conclusions or recommendations expressed in this material are those of the authors(s) and do not necessarily reflect the views of the National Science Foundation.

%
%
%
\bibliographystyle{splncs04}
\bibliography{biblio}
%

\appendix
\captionsetup{belowskip=-10pt}
\section{Appendix}
\subsection{Example Illustrating Oscillating Fitness Score}
\begin{example}[Oscillating Fitness Score]
    \label{ex:osc-fitness}
    Here we provide an example of a system which does not have a well-defined fitness score \wrt rate of \$. Let $M = \{x\}$ be a system with just one trace. Below are some example of the prefixes of this $x$. This trace is pathological, unlikely to be seen by itself in the real world. We craft it specifically so that for all prefixes of $x$, there are two longer prefixes, $x_1$ and $x_2$, such that $\textit{rate}_\$(x_1) = 1/2$ and $\textit{rate}_\$(x_2) = 3/4$. It follows immediately that this fitness score oscillates between 1/2 and 3/4 in the limit.

    \begin{align*}
        &\text{Prefix} & \#_\$&:\#_0 && \textit{rate$_\$$}
        \\
        &0\$ & 1&:1 && 1/2
        \\
        &0\$\$\$ & 3&:1 && 3/4
        \\
        &0\$\$\$00 & 3&:3 && 1/2
        \\
        &0\$\$\$00\$\$\$\$\$\$ & 9&:3 && 3/4
        \\
        &0\$\$\$00\$\$\$\$\$\$000000 & 9&:9 && 1/2
        \\
        &0\$\$\$00\$\$\$\$\$\$000000\$\$\$\$\$\$\$\$\$\$\$\$\$\$\$\$\$\$ & 27&:9 && 3/4
        \\
        &\vdots
    \end{align*}
\end{example}

\subsection{Fitness-Optimal Synthesis Problem}
\label{sec_fitnessoptimalsynthesis}

In addition to the fitness score evaluation and comparison problems considered at the end of Section~\ref{sec_framework},
we can also consider the following {\it fitness-optimal synthesis} problem.
First, we define the following notion of fitness-optimality:

\begin{definition}[Fitness-optimality]
    Let $F = \angles{\Sigma, f,\agg,\cleq,\phi}$. We say that system $M^*$ is optimal \wrt $F$ if $\sat{M^*}{\phi}$ and
	for all systems $M$ such that $\sat{M}{\phi}$, we have $\agg_f M^* \cleq \agg_f M$.
    \label{def:optimality}
\end{definition}

We can then consider the following synthesis problem:
{\it Given fitness function $f$, aggregate function $\agg$, comparison relation $\cleq$, and specification $\phi$, compute,
if it exists, a system which is fitness-optimal with respect to $F = \angles{\Sigma, f,\agg,\cleq,\phi}$.} It is possible that such a system either does not exist or is not unique.

Studying fitness-optimal synthesis is beyond the scope of the current paper and is left for future work.

\subsection{Example of Definition~\ref{def:income-rate} Continued}
\label{sec_example_exrateDFA_continued}

    Consider again the fitness function represented by DFA $f_1$ and $f_2$ shown in
    Fig.~\ref{fig:fit-repr}.

    On input $\$0\$\$0$, $f_1$ visits the sequence of states (where circles denote accepting states)
    $$f_1^0,\circled{$f_1^1$},f_1^0,\circled{$f_1^1$},\circled{$f_1^1$},f_1^0$$
    and therefore, as desired, $f_1(\$0\$\$0) = 3$. On the same input $f_2$ visits
    $$f_2^0,\circled{$f_2^1$},\circled{$f_2^1$},\circled{$f_2^1$},\circled{$f_2^
    1$},\circled{$f_2^1$}$$
    and therefore $f_2(\$0\$\$0) = 5$. Hence, $f(\$0\$\$0) = (3,5)$, which is an
    analog for the rate $3/5$.

\subsection{Proof of Correctness}
\label{sec_pf-correct}
We  prove that the recurrence matrix and the initial condition vector enable us to compute $\Xsum{\img{f}{M_n}}{i}$. This section aims to show $\Xsum{\img{f}{M_n}}{i} = (\xi_i^{n+1} v_i)_0$. We will first prove several lemmas. We fix $i$ and write $\xi$ and $v$ rather than $\xi_i$ and $v_i$

\begin{lemma}
    For all $n \geq 0$, $\vbeta{n+1} = \D\vbeta n$
    \label{lemma:beta-D}
\end{lemma}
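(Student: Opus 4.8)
The claim is the recurrence $\vbeta{n+1} = \D \vbeta n$ for all $n \geq 0$, where $\vbeta n = (\beta_n^1, \dots, \beta_n^N)^T$ and $\beta_n^q$ counts the $n$-length paths through $P$ ending in state $q$. The plan is to verify the identity componentwise: fix a target state $i$ and show $(\D\vbeta n)_i = \beta_{n+1}^i$. Expanding the matrix-vector product gives $(\D\vbeta n)_i = \sum_{j=1}^N \D_{ij}\beta_n^j$, and by definition $\D_{ij}$ is the number of transitions from state $j$ to state $i$ in $P$. So it suffices to argue that $\beta_{n+1}^i = \sum_{j} t_{j,i}\cdot \beta_n^j$, where $t_{j,i}$ is that transition count.

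First I would state the combinatorial bijection that underlies this: an $(n{+}1)$-length path ending in $i$ is uniquely determined by its length-$n$ prefix — which is some $n$-length path ending in a predecessor state $j$ — together with a choice of one transition (i.e.\ one label) from $j$ to $i$. Conversely, every such pair extends to a valid $(n{+}1)$-path ending in $i$, since the path relation $\hDelta$ only requires consecutive states to be connected by a transition. This establishes a bijection between $(n{+}1)$-paths ending in $i$ and the disjoint union, over $j$, of (($n$-paths ending in $j$) $\times$ (transitions $j \to i$)), whose cardinality is exactly $\sum_j t_{j,i}\beta_n^j$. I would note this holds even for $n = 0$: the only $0$-length paths are single initial states, and a $1$-length path ending in $i$ consists of an initial state $j$ followed by a transition $j \to i$, which is $\sum_j t_{j,i}\beta_0^j$.

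The argument is essentially routine once the bijection is spelled out; the main thing to get right is the base case $n=0$ and making sure the empty-path conventions from the definition of $\hDelta$ (namely $(q_0, \epsilon) \in \hDelta$ for $q_0 \in Q_0$) are handled cleanly, so that $\vbeta 0$ really is the indicator vector of the initial states as claimed in the construction of $v$. There is no genuine obstacle here — the "hard part," such as it is, is simply being careful that $\D_{ij} = t_{j,i}$ (note the index transposition: $\D$ is a \emph{predecessor} matrix, so row $=$ destination, column $=$ source), so that the matrix product lines up with the summation. Everything else follows by unwinding definitions.
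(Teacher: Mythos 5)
Your proof is correct and follows essentially the same route as the paper's: expand $(\D\vbeta n)_i$ componentwise and justify the key step by the path-extension argument, namely that every $(n{+}1)$-length path ending in state $i$ arises uniquely from an $n$-length path ending in a predecessor $j$ together with a choice of one of the $t_{j,i}$ transitions from $j$ to $i$. Your treatment is slightly more explicit about the bijection, the $n=0$ base case, and the index transposition in $\D$, but the substance is identical.
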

\begin{proof}
    Consider the following derivation.
    \begin{align*}
        \D\vbeta n
        =
        \begin{bmatrix}
        \D_1\vbeta n \\ \vdots \\ \D_N\vbeta n
        \end{bmatrix}
        =
        \begin{bmatrix}
        \displaystyle\sum_{i=1}^N \D_{1,i}\beta_n^i
        \\ \vdots \\
        \displaystyle\sum_{i=1}^N \D_{N,i}\beta_n^i
        \end{bmatrix}
        =
        \begin{bmatrix}
            \displaystyle\sum_{i=1}^N t_{1,i} \beta_n^i
            \\ \vdots \\
            \displaystyle\sum_{i=1}^N t_{N,i} \beta_n^i
        \end{bmatrix}
        \overeq{(*)}
        \begin{bmatrix}
            \beta_{n+1}^1
            \\ \vdots \\
            \beta_{n+1}^N
        \end{bmatrix}
        = \vbeta{n+1}
    \end{align*}
    where $t_{k,i}$ is the number of transitions from $i$ to $j$.

    All of these steps follow from definition, but step $(*)$ is worth clarifying. If $i$ is a predessor of $j$, then all $n$-length paths leading to $i$ can be extended by next going to state $j$. Hence, the number of $(n+1)$-length paths ending in $j$ is equal to
    $$\displaystyle\sum_{i=1}^N t_{j,i} \beta_n^i$$
\end{proof}

\begin{lemma}
    For all $n \geq 0$, $\valpha{n+1} = \D\valpha n + \A\vbeta n$
    \label{lemma:alpha-AD}
\end{lemma}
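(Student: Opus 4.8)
The plan is to mimic the proof of Lemma~\ref{lemma:beta-D}: expand the right-hand side entry by entry, substitute the definitions of $\D$ and $\A$, and recognise the resulting expression as the $\alpha$-recurrence already extracted in Section~\ref{sec_recur_rel}. Concretely, I would write
$$\D\valpha n + \A\vbeta n = \left(\, \sum_{q=1}^N \D_{q',q}\,\alpha_n^q \;+\; \sum_{q=1}^N \A_{q',q}\,\beta_n^q \,\right)_{q'=1}^{N},$$
and then replace the matrix entries by their combinatorial meaning: $\D_{q',q}$ is, by definition of the predecessor matrix, the number $t_{q,q'}$ of transitions from $q$ to $q'$ in $P$, while $\A_{q',q}$ equals $\D_{q',q}=t_{q,q'}$ when $q'$ is accepting and $0$ otherwise, i.e.\ $\A_{q',q}=t^*_{q,q'}$ in the notation of Section~\ref{sec_recur_rel}. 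Hence the $q'$-th entry of $\D\valpha n + \A\vbeta n$ is exactly $\sum_{q}(t_{q,q'}\,\alpha_n^q) + (t^*_{q,q'}\,\beta_n^q)$, which is precisely the right-hand side of the general $\alpha$-recurrence, namely $\alpha_{n+1}^{q'}$. Collecting all $N$ entries yields $\D\valpha n + \A\vbeta n = \valpha{n+1}$.

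The only place a genuine argument is needed — rather than bookkeeping — is the step identifying that sum with $\alpha_{n+1}^{q'}$. If I want the proof self-contained instead of citing Section~\ref{sec_recur_rel}, I would argue as there: every $(n+1)$-length path of $P$ ending in $q'$ is obtained uniquely by appending one transition $(q,a,q')\in\Delta$ to an $n$-length path ending in some predecessor $q$ of $q'$; for a fixed $q$ there are $t_{q,q'}$ choices of label and $\beta_n^q$ choices of prefix, so $t_{q,q'}\beta_n^q$ such extended paths, each contributing the accepting-state count of its prefix plus $1$ exactly when $q'$ is itself accepting. Summing the prefix contributions over $q$ gives $\sum_q t_{q,q'}\alpha_n^q$, and summing the ``$+1$ when $q'$ accepting'' contributions gives $\mathbb{1}[q'\text{ accepting}]\sum_q t_{q,q'}\beta_n^q = \sum_q t^*_{q,q'}\beta_n^q$; adding gives $\alpha_{n+1}^{q'}$. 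This derivation is uniform in $n\geq 0$: for $n=0$ it uses only that $\vbeta 0$ marks initial states and $\valpha 0$ marks states that are both initial and accepting, so no separate base case is required.

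The hard part is essentially nothing beyond matching the matrix entries to the $t$ and $t^*$ coefficients and keeping straight that the \emph{new} accepting state contributed by $q'$ is weighted by the path count $\beta_n^q$, not by the accepting-state count $\alpha_n^q$; this is exactly why the second summand is $\A\vbeta n$ and not $\A\valpha n$. Everything else is the same linear-algebra unpacking already carried out for Lemma~\ref{lemma:beta-D}.
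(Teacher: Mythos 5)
Your proposal is correct and is exactly the argument the paper intends: the paper omits this proof, stating only that it ``follows from a very similar derivation to that in Lemma~\ref{lemma:beta-D},'' and your entrywise expansion of $\D\valpha n + \A\vbeta n$ together with the identification $\D_{q',q}=t_{q,q'}$, $\A_{q',q}=t^*_{q,q'}$ and the path-extension count is precisely that derivation, carried out in full. The combinatorial justification you give for the step $(*)$-analogue (the new accepting state at $q'$ being weighted by $\beta_n^q$ rather than $\alpha_n^q$) matches the reasoning the paper gives informally when deriving Equations~(\ref{eqn:rec3}).
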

\begin{proof}
    This lemma follows from a very similar derivation to that in Lemma \ref{lemma:beta-D}. We omit it.
\end{proof}

Armed with these lemmas, we can prove the primary result of this paper, first stated in Section~\ref{sec_recur_rel}.

\subsubsection{Proof of Theorem~\ref{thm:main-correctness}.}
\begin{proof}[By induction]
    We first prove the base case where $n = 0$.
    \begin{align*}
        \xi v
        \overeq{(\text{Def})}
        \begin{bmatrix}
            0       & \hat{1} & \hat{0} \\
            \hat{0} & \D      & \A\\
            \hat{0} & \mathbf{0} & \D
        \end{bmatrix}
        \begin{bmatrix}
            0 \\ \valpha 0 \\ \vbeta 0
        \end{bmatrix}
        =
        \begin{bmatrix}
            \sum\valpha 0 \\ \D\valpha 0 + \A\vbeta 0 \\ \D\vbeta 0
        \end{bmatrix}
        \overeq{(\text{L\ref{lemma:beta-D}, L\ref{lemma:alpha-AD}})}
        \begin{bmatrix}
            \alpha_0 \\ \valpha 1 \\ \vbeta 1
        \end{bmatrix}
    \end{align*}
    and now we prove the inductive case.
    \begin{align*}
        \xi^{n+1}v
        =
        \xi\xi^{n}v
        \overeq{(\text{IH})}
        \begin{bmatrix}
            0       & \hat{1} & \hat{0} \\
            \hat{0} & \D      & \A\\
            \hat{0} & \mathbf{0} & \D
        \end{bmatrix}
        \begin{bmatrix}
            \alpha_{n-1} \\ \valpha n \\ \vbeta n
        \end{bmatrix}
        =
        \begin{bmatrix}
            \sum\valpha n \\ \D\valpha n + \A\vbeta n \\ \D\vbeta n
        \end{bmatrix}
        \overeq{(\text{L\ref{lemma:beta-D}, L\ref{lemma:alpha-AD}})}
        \begin{bmatrix}
            \alpha_n \\ \valpha {n+1} \\ \vbeta {n+1}
        \end{bmatrix}
    \end{align*}
\end{proof}

\subsubsection{Proof of Corollary~\ref{cor:main-correctness}.}
\begin{proof}
For each $f_i\abs{}M$, $(\xi_i^{n+1}v_i)_0 = \alpha_n$ by Theorem~\ref{thm:main-correctness}. By definition, $\alpha_n$ is the total number of accepting states visited across all length $n$ paths through $M\abs{}f_i$. Therefore, $(\xi_i^{n+1}v_i)_0 = \Xsum{\img{f}{M_n}}{i}$. The corollary follows:
\begin{align*}
    \agg_f(M)
    &= \lim_{n\to\infty}
    h(\Xsum{\img{f}{M_n}}{1}, \Xsum{\img{f}{M_n}}{2}, ... ,\Xsum{\img{f}{M_n}}{d})\\
    &= \lim_{n\to\infty}
    h((\xi_1^{n+1}v_1)_0, (\xi_2^{n+1}v_2)_0, ..., (\xi_d^{n+1}v_d)_0)
\end{align*}
\end{proof}

\subsection{Case Studies: Intermediate Results and Additional Figures}
\label{sec_cstudy-overflow}

We first point to the intermediate results for the system using the good receiver for the toy communication protocol of Example~\ref{fig:protocol}.
Figs.~\ref{fig:fitness-product-product} and ~\ref{fig:rec-matrix} are the intermediate results of steps and 1 and 2 of the algorithm, respectively.
Fig.~\ref{fig:Xhat1} shows the symbolic expressions used to compute the limit exactly. Likewise for the system using the bad receiver, see Figs.~\ref{fig:fitness-product-product-pr},~\ref{fig:rec-matrix-pr}, and~\ref{fig:Xhat1-pr}.
Fig.~\ref{fig:toy-approx} shows a plot of the intermediate $K$-approximations necessary to compute the final $K$-approximations, where $K=9000$.

The 2PC transaction managers are depicted in Figs.~\ref{2pc_H},~\ref{2pc_A1}, and~\ref{2pc_A2}.
Due to space, we do not include the intermediate calculations for the 2PC case study, but the plot of the $K$-approximations can be found in Fig.~\ref{fig:2pc-approx}.

The ABP receivers of \cite{AlurTripakisSIGACT17} are depicted
in Figs.~\ref{fig:abp-rec} and \ref{fig:abp-rec_pr} as well as two of their senders in Figs.~\ref{fig:abp-sndr} and \ref{fig:abp-sndr_pr}.
The plot of the $K$-approximations can be found in Fig.~\ref{fig:abp-approx}.

\begin{figure}
    \begin{subfigure}[b]{0.475\textwidth}
        \centering
\begin{center}
\begin{tikzpicture}[scale=\tiksize]
\tikzstyle{every node}+=[inner sep=0pt]
\draw [black] (31.6,-29.4) circle (3);
\draw (31.6,-29.4) node {$p_0^1$};
\draw [black] (31.6,-36.3) circle (3);
\draw (31.6,-36.3) node {$p_0^3$};
\draw [black] (31.6,-36.3) circle (2.4);
\draw [black] (40.7,-36.3) circle (3);
\draw (40.7,-36.3) node {$p_1^2$};
\draw [black] (49.9,-36.3) circle (3);
\draw (49.9,-36.3) node {$p_2^2$};
\draw [black] (22.9,-29.4) -- (28.6,-29.4);
\draw (22.4,-29.4) node [left] {$M\abs{}f_1\mbox{ }=$};
\fill [black] (28.6,-29.4) -- (27.8,-28.9) -- (27.8,-29.9);
\draw [black] (34.581,-29.44) arc (78.13284:27.52532:7.745);
\fill [black] (39.86,-33.44) -- (39.93,-32.5) -- (39.04,-32.96);
\draw (38.56,-30.35) node [above] {$s$};
\draw [black] (37.899,-37.336) arc (-79.04746:-100.95254:9.203);
\fill [black] (34.4,-37.34) -- (35.09,-37.98) -- (35.28,-37);
\draw (36.15,-38) node [below] {$a$};
\draw [black] (34.6,-36.3) -- (37.7,-36.3);
\fill [black] (37.7,-36.3) -- (36.9,-35.8) -- (36.9,-36.8);
\draw (36.15,-35.8) node [above] {$s$};
\draw [black] (43.7,-36.3) -- (46.9,-36.3);
\fill [black] (46.9,-36.3) -- (46.1,-35.8) -- (46.1,-36.8);
\draw (45.3,-35.8) node [above] {$t$};
\draw [black] (47.043,-37.182) arc (-80.76477:-99.23523:10.858);
\fill [black] (43.56,-37.18) -- (44.27,-37.8) -- (44.43,-36.82);
\draw (45.3,-37.82) node [below] {$s$};
\end{tikzpicture}
\end{center}
    \end{subfigure}
    \begin{subfigure}[b]{0.475\textwidth}
        \centering
\begin{center}
\begin{tikzpicture}[scale=\tiksize]
\tikzstyle{every node}+=[inner sep=0pt]
\draw [black] (31.6,-29.4) circle (3);
\draw (31.6,-29.4) node {$p_0^1$};
\draw [black] (31.6,-36.3) circle (3);
\draw (31.6,-36.3) node {$p_0^2$};
\draw [black] (31.6,-36.3) circle (2.4);
\draw [black] (40.7,-36.3) circle (3);
\draw (40.7,-36.3) node {$p_1^2$};
\draw [black] (40.7,-36.3) circle (2.4);
\draw [black] (49.9,-36.3) circle (3);
\draw (49.9,-36.3) node {$p_2^2$};
\draw [black] (49.9,-36.3) circle (2.4);
\draw [black] (22.9,-29.4) -- (28.6,-29.4);
\draw (22.4,-29.4) node [left] {$M\abs{}f_2\mbox{ }=$};
\fill [black] (28.6,-29.4) -- (27.8,-28.9) -- (27.8,-29.9);
\draw [black] (34.581,-29.44) arc (78.13284:27.52532:7.745);
\fill [black] (39.86,-33.44) -- (39.93,-32.5) -- (39.04,-32.96);
\draw (38.56,-30.35) node [above] {$s$};
\draw [black] (37.899,-37.336) arc (-79.04746:-100.95254:9.203);
\fill [black] (34.4,-37.34) -- (35.09,-37.98) -- (35.28,-37);
\draw (36.15,-38) node [below] {$a$};
\draw [black] (34.6,-36.3) -- (37.7,-36.3);
\fill [black] (37.7,-36.3) -- (36.9,-35.8) -- (36.9,-36.8);
\draw (36.15,-35.8) node [above] {$s$};
\draw [black] (43.7,-36.3) -- (46.9,-36.3);
\fill [black] (46.9,-36.3) -- (46.1,-35.8) -- (46.1,-36.8);
\draw (45.3,-35.8) node [above] {$t$};
\draw [black] (47.043,-37.182) arc (-80.76477:-99.23523:10.858);
\fill [black] (43.56,-37.18) -- (44.27,-37.8) -- (44.43,-36.82);
\draw (45.3,-37.82) node [below] {$s$};
\end{tikzpicture}
\end{center}
    \end{subfigure}
    \caption{Synchronous products $M\abs{}f_1$ and $M\abs{}f_2$. In $M\abs{}f_k$, state $p_i^j$ corresponds to the pair of states $p_i$ of $M$ and $f_k^j$ of $f_k$.}
    \label{fig:fitness-product-product}
\end{figure}
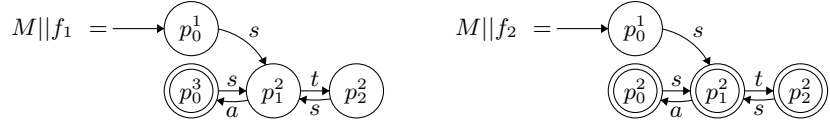

\begin{figure}[!ht]
    \centering
    \input{new_figs/rec-matrix}
    \caption{The recurrence matrices and initial condition vectors $\xi_k,v_k$ correspond to $M\abs{}f_k$ and their intermediate components are
    $\D^{(k)}$, $\A^{(k)}$, $\valpha0^{(k)}$, $\vbeta0^{(k)}$.}
    \label{fig:rec-matrix}
\end{figure}

\begin{figure}[!ht]
    \centering
    \input{new_figs/Xhat1}
    \caption{The symbolic, arithmetic expression for $\Xsum{\img{f}{M_n}}{1}$ and $\Xsum{\img{f}{M_n}}{2}$. When taking a limit, we can ignore the cases where $n = 0$.}
    \label{fig:Xhat1}
\end{figure}

\begin{figure}[!b]
    \begin{subfigure}[b]{0.475\textwidth}
        \centering
\begin{center}
\begin{tikzpicture}[scale=\tiksize]
\tikzstyle{every node}+=[inner sep=0pt]
\draw [black] (31.6,-29.4) circle (3);
\draw (31.6,-29.4) node {${p_0^1}'$};
\draw [black] (40.7,-29.4) circle (3);
\draw (40.7,-29.4) node {${p_1^2}'$};
\draw [black] (49.9,-29.4) circle (3);
\draw (49.9,-29.4) node {${p_2^2}'$};
\draw [black] (31.6,-36.3) circle (3);
\draw (31.6,-36.3) node {${p_0^3}'$};
\draw [black] (31.6,-36.3) circle (2.4);
\draw [black] (40.7,-36.3) circle (3);
\draw (40.7,-36.3) node {${p_3^2}'$};
\draw [black] (49.9,-36.3) circle (3);
\draw (49.9,-36.3) node {${p_4^2}'$};
\draw [black] (33.783,-27.4) arc (116.36909:63.63091:5.329);
\fill [black] (38.52,-27.4) -- (38.02,-26.6) -- (37.58,-27.49);
\draw (36.15,-26.35) node [above] {$s$};
\draw [black] (42.759,-27.275) arc (119.33321:60.66679:5.187);
\fill [black] (47.84,-27.28) -- (47.39,-26.45) -- (46.9,-27.32);
\draw (45.3,-26.11) node [above] {$t$};
\draw [black] (47.5,-31.2) -- (43.1,-34.5);
\fill [black] (43.1,-34.5) -- (44.04,-34.42) -- (43.44,-33.62);
\draw (44.41,-32.35) node [above] {$s$};
\draw [black] (43.7,-36.3) -- (46.9,-36.3);
\fill [black] (46.9,-36.3) -- (46.1,-35.8) -- (46.1,-36.8);
\draw (45.3,-35.8) node [above] {$t$};
\draw [black] (47.064,-37.246) arc (-80.0072:-99.9928:10.168);
\fill [black] (43.54,-37.25) -- (44.24,-37.88) -- (44.41,-36.89);
\draw (45.3,-37.9) node [below] {$s$};
\draw [black] (37.7,-36.3) -- (34.6,-36.3);
\fill [black] (34.6,-36.3) -- (35.4,-36.8) -- (35.4,-35.8);
\draw (36.15,-35.8) node [above] {$a$};
\draw [black] (33.99,-34.49) -- (38.31,-31.21);
\fill [black] (38.31,-31.21) -- (37.37,-31.3) -- (37.97,-32.09);
\draw (35.26,-32.35) node [above] {$s$};
\draw [black] (22.9,-29.4) -- (28.6,-29.4);
\draw (22.4,-29.4) node [left] {$M'\abs{}f_1\mbox{ }=$};
\fill [black] (28.6,-29.4) -- (27.8,-28.9) -- (27.8,-29.9);
\end{tikzpicture}
\end{center}
    \end{subfigure}
    \begin{subfigure}[b]{0.475\textwidth}
        \centering
\begin{center}
\begin{tikzpicture}[scale=\tiksize]
\tikzstyle{every node}+=[inner sep=0pt]
\draw [black] (31.6,-29.4) circle (3);
\draw (31.6,-29.4) node {${p_0^1}'$};
\draw [black] (40.7,-29.4) circle (3);
\draw (40.7,-29.4) node {${p_1^2}'$};
\draw [black] (40.7,-29.4) circle (2.4);
\draw [black] (49.9,-29.4) circle (3);
\draw (49.9,-29.4) node {${p_2^2}'$};
\draw [black] (49.9,-29.4) circle (2.4);
\draw [black] (31.6,-36.3) circle (3);
\draw (31.6,-36.3) node {${p_0^2}'$};
\draw [black] (31.6,-36.3) circle (2.4);
\draw [black] (40.7,-36.3) circle (3);
\draw (40.7,-36.3) node {${p_3^2}'$};
\draw [black] (40.7,-36.3) circle (2.4);
\draw [black] (49.9,-36.3) circle (3);
\draw (49.9,-36.3) node {${p_4^2}'$};
\draw [black] (49.9,-36.3) circle (2.4);
\draw [black] (33.783,-27.4) arc (116.36909:63.63091:5.329);
\fill [black] (38.52,-27.4) -- (38.02,-26.6) -- (37.58,-27.49);
\draw (36.15,-26.35) node [above] {$s$};
\draw [black] (42.759,-27.275) arc (119.33321:60.66679:5.187);
\fill [black] (47.84,-27.28) -- (47.39,-26.45) -- (46.9,-27.32);
\draw (45.3,-26.11) node [above] {$t$};
\draw [black] (47.5,-31.2) -- (43.1,-34.5);
\fill [black] (43.1,-34.5) -- (44.04,-34.42) -- (43.44,-33.62);
\draw (44.41,-32.35) node [above] {$s$};
\draw [black] (43.7,-36.3) -- (46.9,-36.3);
\fill [black] (46.9,-36.3) -- (46.1,-35.8) -- (46.1,-36.8);
\draw (45.3,-35.8) node [above] {$t$};
\draw [black] (47.064,-37.246) arc (-80.0072:-99.9928:10.168);
\fill [black] (43.54,-37.25) -- (44.24,-37.88) -- (44.41,-36.89);
\draw (45.3,-37.9) node [below] {$s$};
\draw [black] (37.7,-36.3) -- (34.6,-36.3);
\fill [black] (34.6,-36.3) -- (35.4,-36.8) -- (35.4,-35.8);
\draw (36.15,-35.8) node [above] {$a$};
\draw [black] (33.99,-34.49) -- (38.31,-31.21);
\fill [black] (38.31,-31.21) -- (37.37,-31.3) -- (37.97,-32.09);
\draw (35.26,-32.35) node [above] {$s$};
\draw [black] (22.9,-29.4) -- (28.6,-29.4);
\draw (22.4,-29.4) node [left] {$M'\abs{}f_2\mbox{ }=$};
\fill [black] (28.6,-29.4) -- (27.8,-28.9) -- (27.8,-29.9);
\end{tikzpicture}
\end{center}
    \end{subfigure}
    \caption{Synchronous products $M'\abs{}f_1$ and $M'\abs{}f_2$. In $M'\abs{}f_k$, state ${p_i^j}'$ corresponds to the pair of states $p_i'$ of $M'$ and $f_k^j$ of $f_k$.}
    \label{fig:fitness-product-product-pr}
\end{figure}
\begin{figure}
    \centering
    \input{new_figs/rec-matrix-pr}
    \caption{The recurrence matrices and initial condition vectors $\xi_k',v_k'$ correspond to $M'\abs{}f_k$ and their intermediate components are
    $\D'^{(k)}$, $\A'^{(k)}$, $\valpha0'^{(k)}$, $\vbeta0'^{(k)}$.}
    \label{fig:rec-matrix-pr}
\end{figure}
\begin{figure}[!b]
    \centering
    \input{new_figs/Xhat1-pr}
    \caption{The symbolic, arithmetic expression for $\Xsum{\img{f}{M'_n}}{1}$. This expression can be represented as a non-piecewise function using complex numbers; indeed that is how Mathematica expresses it by default. The terms $c_j, k_j$ are $\sqrt{5} + j$ and $\sqrt{5} - j$ respectively. The term $m := \floor{n/4}$. We omit the expression for $\Xsum{\img{f}{M'_n}}{2}$.}
    \label{fig:Xhat1-pr}
\end{figure}
\begin{figure}[!ht]
    \centering
    \includegraphics[width=\textwidth]{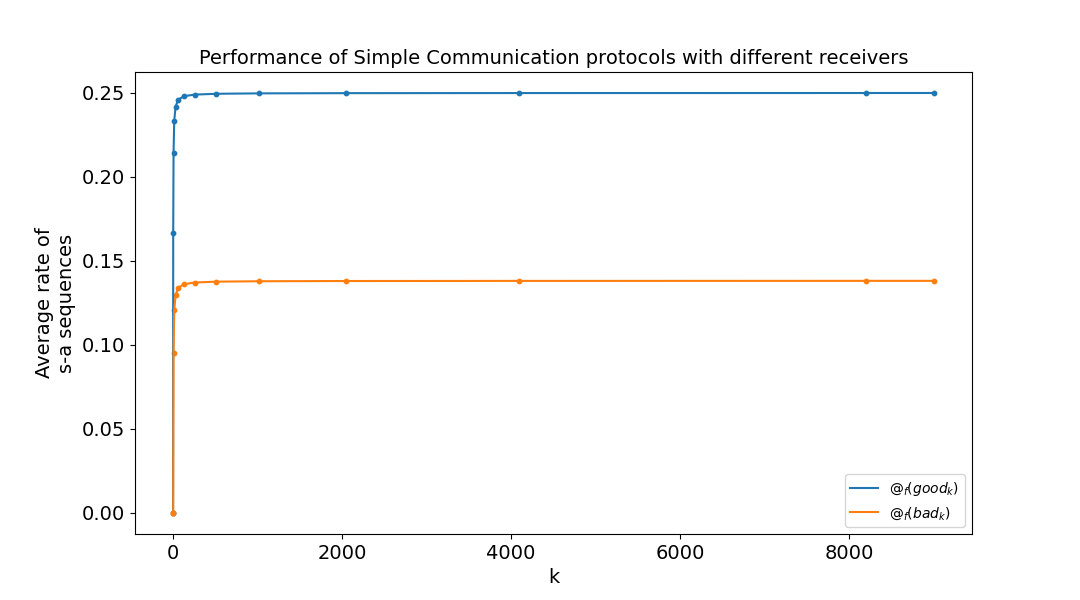}
    \caption{A graph of the $K$-approximation against $K$ for the simple communication protocol case study.}
    \label{fig:toy-approx}
\end{figure}

\begin{figure}
    \centering
\begin{center}
\begin{tikzpicture}[scale=\tiksize]
\tikzstyle{every node}+=[inner sep=0pt]
\draw [black] (9.2,-14.6) circle (3);
\draw (9.2,-14.6) node {$m_0$};
\draw [black] (21.4,-14.6) circle (3);
\draw (21.4,-14.6) node {$m_1$};
\draw [black] (33.6,-14.6) circle (3);
\draw (33.6,-14.6) node {$m_2$};
\draw [black] (45.8,-14.6) circle (3);
\draw (45.8,-14.6) node {$m_3$};
\draw [black] (39.7,-26.8) circle (3);
\draw (39.7,-26.8) node {$m_5$};
\draw [black] (51.9,-26.8) circle (3);
\draw (51.9,-26.8) node {$m_4$};
\draw [black] (27.5,-26.8) circle (3);
\draw (27.5,-26.8) node {$m_6$};
\draw [black] (15.3,-26.8) circle (3);
\draw (15.3,-26.8) node {$m_7$};
\draw [black] (9.2,-38.9) circle (3);
\draw (9.2,-38.9) node {$m_{11}$};
\draw [black] (21.4,-38.9) circle (3);
\draw (21.4,-38.9) node {$m_{10}$};
\draw [black] (33.6,-38.9) circle (3);
\draw (33.6,-38.9) node {$m_9$};
\draw [black] (45.8,-38.9) circle (3);
\draw (45.8,-38.9) node {$m_8$};
\draw [black] (4.1,-14.6) -- (6.2,-14.6);
\draw (3.6,-14.6) node [left] {$M_H$};
\fill [black] (6.2,-14.6) -- (5.4,-14.1) -- (5.4,-15.1);
\draw [black] (12.2,-14.6) -- (18.4,-14.6);
\fill [black] (18.4,-14.6) -- (17.6,-14.1) -- (17.6,-15.1);
\draw (15.3,-14.1) node [above] {$x?$};
\draw [black] (24.4,-14.6) -- (30.6,-14.6);
\fill [black] (30.6,-14.6) -- (29.8,-14.1) -- (29.8,-15.1);
\draw (27.5,-14.1) node [above] {$x_1!$};
\draw [black] (36.6,-14.6) -- (42.8,-14.6);
\fill [black] (42.8,-14.6) -- (42,-14.1) -- (42,-15.1);
\draw (39.7,-14.1) node [above] {$x_2!$};
\draw [black] (48.771,-14.238) arc (90.0819:-90.0819:12.512);
\fill [black] (48.77,-39.26) -- (49.57,-39.76) -- (49.57,-38.76);
\draw (61.8,-26.75) node [right] {$\textit{no}_*?$};
\draw [black] (47.14,-17.28) -- (50.56,-24.12);
\fill [black] (50.56,-24.12) -- (50.65,-23.18) -- (49.75,-23.62);
\draw (48.15,-21.81) node [left] {$\textit{yes}_*?$};
\draw [black] (50.55,-29.48) -- (47.15,-36.22);
\fill [black] (47.15,-36.22) -- (47.96,-35.73) -- (47.06,-35.28);
\draw [black] (48.9,-26.8) -- (42.7,-26.8);
\fill [black] (42.7,-26.8) -- (43.5,-27.3) -- (43.5,-26.3);
\draw (45.8,-26.3) node [above] {$\textit{yes}_*?$};
\draw [black] (36.7,-26.8) -- (30.5,-26.8);
\fill [black] (30.5,-26.8) -- (31.3,-27.3) -- (31.3,-26.3);
\draw (33.6,-26.3) node [above] {$\textit{cm}_1!$};
\draw [black] (24.5,-26.8) -- (18.3,-26.8);
\fill [black] (18.3,-26.8) -- (19.1,-27.3) -- (19.1,-26.3);
\draw (21.4,-26.3) node [above] {$\textit{cm}_2!$};
\draw [black] (13.96,-24.12) -- (10.54,-17.28);
\fill [black] (10.54,-17.28) -- (10.45,-18.22) -- (11.35,-17.78);
\draw (12.95,-19.59) node [right] {$\textit{succ}!$};
\draw [black] (9.2,-35.9) -- (9.2,-17.6);
\fill [black] (9.2,-17.6) -- (8.7,-18.4) -- (9.7,-18.4);
\draw (8.7,-26.75) node [left] {$\textit{fail}!$};
\draw [black] (18.4,-38.9) -- (12.2,-38.9);
\fill [black] (12.2,-38.9) -- (13,-39.4) -- (13,-38.4);
\draw (15.3,-38.4) node [above] {$\textit{ab}_2!$};
\draw [black] (30.6,-38.9) -- (24.4,-38.9);
\fill [black] (24.4,-38.9) -- (25.2,-39.4) -- (25.2,-38.4);
\draw (27.5,-38.4) node [above] {$\textit{ab}_1!$};
\draw [black] (42.8,-38.9) -- (36.6,-38.9);
\fill [black] (36.6,-38.9) -- (37.4,-39.4) -- (37.4,-38.4);
\draw (39.7,-38.4) node [above] {$\textit{yes}_*?$};
\draw (39.7,-36.4) node [above] {$\textit{no}_*?$};
\draw [black] (46.629,-11.729) arc (191.63248:-96.36752:2.25);
\draw (51.81,-9.13) node [above] {$x?$};
\fill [black] (48.58,-13.51) -- (49.47,-13.84) -- (49.27,-12.86);
\draw [black] (7.877,-11.92) arc (234:-54:2.25);
\draw (9.2,-7.35) node [above] {$\textit{yes}_*?,\textit{no}_*?$};
\fill [black] (10.52,-11.92) -- (11.4,-11.57) -- (10.59,-10.98);
\draw [black] (54.257,-28.637) arc (79.80895:-208.19105:2.25);
\draw (56.27,-33.39) node [below] {$x?$};
\fill [black] (51.88,-29.79) -- (51.24,-30.49) -- (52.23,-30.66);
\end{tikzpicture}
\end{center}
    \caption{The transaction manager used by the 2PC system labeled H. The label $\textit{yes}_*$ denotes both $\textit{yes}_1$ and $\textit{yes}_2$. Likewise for $\textit{no}_*$.}
    \label{2pc_H}
\end{figure}
\begin{figure}
    \centering
\begin{center}
\begin{tikzpicture}[scale=\tiksize]
\tikzstyle{every node}+=[inner sep=0pt]
\draw [black] (9.2,-14.6) circle (3);
\draw (9.2,-14.6) node {$m_0$};
\draw [black] (21.4,-14.6) circle (3);
\draw (21.4,-14.6) node {$m_1$};
\draw [black] (39.7,-14.6) circle (3);
\draw (39.7,-14.6) node {$m_2$};
\draw [black] (54.8,-26.8) circle (3);
\draw (54.8,-26.8) node {$m_3$};
\draw [black] (39.7,-26.8) circle (3);
\draw (39.7,-26.8) node {$m_5$};
\draw [black] (52,-14.6) circle (3);
\draw (52,-14.6) node {$m_4$};
\draw [black] (27.5,-26.8) circle (3);
\draw (27.5,-26.8) node {$m_6$};
\draw [black] (15.3,-26.8) circle (3);
\draw (15.3,-26.8) node {$m_7$};
\draw [black] (9.2,-38.9) circle (3);
\draw (9.2,-38.9) node {$m_{11}$};
\draw [black] (21.4,-38.9) circle (3);
\draw (21.4,-38.9) node {$m_{10}$};
\draw [black] (33.6,-38.9) circle (3);
\draw (33.6,-38.9) node {$m_9$};
\draw [black] (45.8,-38.9) circle (3);
\draw (45.8,-38.9) node {$m_8$};
\draw [black] (4.1,-14.6) -- (6.2,-14.6);
\draw (3.6,-14.6) node [left] {$M_A$};
\fill [black] (6.2,-14.6) -- (5.4,-14.1) -- (5.4,-15.1);
\draw [black] (12.2,-14.6) -- (18.4,-14.6);
\fill [black] (18.4,-14.6) -- (17.6,-14.1) -- (17.6,-15.1);
\draw (15.3,-14.1) node [above] {$x?$};
\draw [black] (24.4,-14.6) -- (36.7,-14.6);
\fill [black] (36.7,-14.6) -- (35.9,-14.1) -- (35.9,-15.1);
\draw (30.55,-14.1) node [above] {$x_1!$};
\draw [black] (36.7,-26.8) -- (30.5,-26.8);
\fill [black] (30.5,-26.8) -- (31.3,-27.3) -- (31.3,-26.3);
\draw (33.6,-26.3) node [above] {$\textit{cm}_1!$};
\draw [black] (24.5,-26.8) -- (18.3,-26.8);
\fill [black] (18.3,-26.8) -- (19.1,-27.3) -- (19.1,-26.3);
\draw (21.4,-26.3) node [above] {$\textit{cm}_2!$};
\draw [black] (13.96,-24.12) -- (10.54,-17.28);
\fill [black] (10.54,-17.28) -- (10.45,-18.22) -- (11.35,-17.78);
\draw (12.95,-19.59) node [right] {$\textit{succ}!$};
\draw [black] (9.2,-35.9) -- (9.2,-17.6);
\fill [black] (9.2,-17.6) -- (8.7,-18.4) -- (9.7,-18.4);
\draw (8.7,-26.75) node [left] {$\textit{fail}!$};
\draw [black] (18.4,-38.9) -- (12.2,-38.9);
\fill [black] (12.2,-38.9) -- (13,-39.4) -- (13,-38.4);
\draw (15.3,-38.4) node [above] {$\textit{ab}_2!$};
\draw [black] (30.6,-38.9) -- (24.4,-38.9);
\fill [black] (24.4,-38.9) -- (25.2,-39.4) -- (25.2,-38.4);
\draw (27.5,-38.4) node [above] {$\textit{ab}_1!$};
\draw [black] (7.877,-11.92) arc (234:-54:2.25);
\draw (9.2,-7.35) node [above] {$\textit{yes}_*?,\textit{no}_*?$};
\fill [black] (10.52,-11.92) -- (11.4,-11.57) -- (10.59,-10.98);
\draw [black] (36.821,-13.8) arc (282.20652:-5.79348:2.25);
\draw (34.13,-8.65) node [left] {$x?$};
\fill [black] (38.58,-11.83) -- (38.9,-10.94) -- (37.93,-11.15);
\draw [black] (42.7,-14.6) -- (49,-14.6);
\fill [black] (49,-14.6) -- (48.2,-14.1) -- (48.2,-15.1);
\draw (45.85,-14.1) node [above] {$\textit{yes}_1?$};
\draw [black] (42.03,-16.49) -- (52.47,-24.91);
\fill [black] (52.47,-24.91) -- (52.16,-24.02) -- (51.53,-24.8);
\draw (49.5,-20.21) node [above] {$\textit{no}_1?$};
\draw [black] (39.7,-17.6) -- (39.7,-23.8);
\fill [black] (39.7,-23.8) -- (40.2,-23) -- (39.2,-23);
\draw (39.2,-20.7) node [left] {$\textit{yes}_2?$};
\draw [black] (41.621,-16.897) arc (33.96341:-62.14683:14.475);
\fill [black] (36.38,-37.78) -- (37.32,-37.85) -- (36.85,-36.97);
\draw (44.42,-28.97) node [right] {$\textit{no}_2?$};
\draw [black] (53.01,-29.21) -- (47.59,-36.49);
\fill [black] (47.59,-36.49) -- (48.47,-36.15) -- (47.67,-35.55);
\draw (50.88,-34.24) node [right] {$x_2!$};
\draw [black] (23.06,-12.106) arc (141.42372:38.57628:17.447);
\fill [black] (23.06,-12.11) -- (23.95,-11.79) -- (23.17,-11.17);
\draw (36.7,-5.04) node [above] {$x_2!$};
\draw [black] (42.8,-38.9) -- (36.6,-38.9);
\fill [black] (36.6,-38.9) -- (37.4,-39.4) -- (37.4,-38.4);
\draw (39.7,-39.4) node [below] {$\textit{yes}_*?$};
\draw (39.7,-41.4) node [below] {$\textit{no}_*?$};
\draw [black] (47.123,-41.58) arc (54:-234:2.25);
\draw (45.8,-46.15) node [below] {$x?$};
\fill [black] (44.48,-41.58) -- (43.6,-41.93) -- (44.41,-42.52);
\end{tikzpicture}
\end{center}
    \caption{The transaction manager used by the 2PC system labeled A1. The label $\textit{yes}_*$ denotes both $\textit{yes}_1$ and $\textit{yes}_2$. Likewise for $\textit{no}_*$.}
    \label{2pc_A1}
\end{figure}
\begin{figure}
    \vspace{1em}
    \centering
\begin{center}
\begin{tikzpicture}[scale=\tiksize]
\tikzstyle{every node}+=[inner sep=0pt]
\draw [black] (9.2,-14.6) circle (3);
\draw (9.2,-14.6) node {$m_0$};
\draw [black] (21.4,-14.6) circle (3);
\draw (21.4,-14.6) node {$m_1$};
\draw [black] (39.7,-14.6) circle (3);
\draw (39.7,-14.6) node {$m_2$};
\draw [black] (54.8,-26.8) circle (3);
\draw (54.8,-26.8) node {$m_3$};
\draw [black] (39.7,-26.8) circle (3);
\draw (39.7,-26.8) node {$m_5$};
\draw [black] (52,-14.6) circle (3);
\draw (52,-14.6) node {$m_4$};
\draw [black] (27.5,-26.8) circle (3);
\draw (27.5,-26.8) node {$m_6$};
\draw [black] (15.3,-26.8) circle (3);
\draw (15.3,-26.8) node {$m_7$};
\draw [black] (9.2,-38.9) circle (3);
\draw (9.2,-38.9) node {$m_{11}$};
\draw [black] (21.4,-38.9) circle (3);
\draw (21.4,-38.9) node {$m_{10}$};
\draw [black] (33.6,-38.9) circle (3);
\draw (33.6,-38.9) node {$m_9$};
\draw [black] (45.8,-38.9) circle (3);
\draw (45.8,-38.9) node {$m_8$};
\draw [black] (4.1,-14.6) -- (6.2,-14.6);
\draw (3.6,-14.6) node [left] {$M_A$};
\fill [black] (6.2,-14.6) -- (5.4,-14.1) -- (5.4,-15.1);
\draw [black] (12.2,-14.6) -- (18.4,-14.6);
\fill [black] (18.4,-14.6) -- (17.6,-14.1) -- (17.6,-15.1);
\draw (15.3,-14.1) node [above] {$x?$};
\draw [black] (24.4,-14.6) -- (36.7,-14.6);
\fill [black] (36.7,-14.6) -- (35.9,-14.1) -- (35.9,-15.1);
\draw (30.55,-14.1) node [above] {$x_1!$};
\draw [black] (36.7,-26.8) -- (30.5,-26.8);
\fill [black] (30.5,-26.8) -- (31.3,-27.3) -- (31.3,-26.3);
\draw (33.6,-26.3) node [above] {$\textit{cm}_1!$};
\draw [black] (24.5,-26.8) -- (18.3,-26.8);
\fill [black] (18.3,-26.8) -- (19.1,-27.3) -- (19.1,-26.3);
\draw (21.4,-26.3) node [above] {$\textit{cm}_2!$};
\draw [black] (13.96,-24.12) -- (10.54,-17.28);
\fill [black] (10.54,-17.28) -- (10.45,-18.22) -- (11.35,-17.78);
\draw (12.95,-19.59) node [right] {$\textit{succ}!$};
\draw [black] (9.2,-35.9) -- (9.2,-17.6);
\fill [black] (9.2,-17.6) -- (8.7,-18.4) -- (9.7,-18.4);
\draw (8.7,-26.75) node [left] {$\textit{fail}!$};
\draw [black] (18.4,-38.9) -- (12.2,-38.9);
\fill [black] (12.2,-38.9) -- (13,-39.4) -- (13,-38.4);
\draw (15.3,-38.4) node [above] {$\textit{ab}_2!$};
\draw [black] (30.6,-38.9) -- (24.4,-38.9);
\fill [black] (24.4,-38.9) -- (25.2,-39.4) -- (25.2,-38.4);
\draw (27.5,-38.4) node [above] {$\textit{ab}_1!$};
\draw [black] (7.877,-11.92) arc (234:-54:2.25);
\draw (9.2,-7.35) node [above] {$\textit{yes}_*?,\textit{no}_*?$};
\fill [black] (10.52,-11.92) -- (11.4,-11.57) -- (10.59,-10.98);
\draw [black] (36.821,-13.8) arc (282.20652:-5.79348:2.25);
\draw (34.13,-8.65) node [left] {$x?$};
\fill [black] (38.58,-11.83) -- (38.9,-10.94) -- (37.93,-11.15);
\draw [black] (42.7,-14.6) -- (49,-14.6);
\fill [black] (49,-14.6) -- (48.2,-14.1) -- (48.2,-15.1);
\draw (45.85,-14.1) node [above] {$\textit{yes}_1?$};
\draw [black] (42.03,-16.49) -- (52.47,-24.91);
\fill [black] (52.47,-24.91) -- (52.16,-24.02) -- (51.53,-24.8);
\draw (49.5,-20.21) node [above] {$\textit{no}_1?$};
\draw [black] (39.7,-17.6) -- (39.7,-23.8);
\fill [black] (39.7,-23.8) -- (40.2,-23) -- (39.2,-23);
\draw (39.2,-20.7) node [left] {$\textit{yes}_2?$};
\draw [black] (41.621,-16.897) arc (33.96341:-62.14683:14.475);
\fill [black] (36.38,-37.78) -- (37.32,-37.85) -- (36.85,-36.97);
\draw (44.42,-28.97) node [right] {$\textit{no}_2?$};
\draw [black] (53.01,-29.21) -- (47.59,-36.49);
\fill [black] (47.59,-36.49) -- (48.47,-36.15) -- (47.67,-35.55);
\draw (50.88,-34.24) node [right] {$x_2!$};
\draw [black] (42.8,-38.9) -- (36.6,-38.9);
\fill [black] (36.6,-38.9) -- (37.4,-39.4) -- (37.4,-38.4);
\draw (39.7,-39.4) node [below] {$\textit{yes}_*?$};
\draw (39.7,-41.4) node [below] {$\textit{no}_*?$};
\draw [black] (47.123,-41.58) arc (54:-234:2.25);
\draw (45.8,-46.15) node [below] {$x?$};
\fill [black] (44.48,-41.58) -- (43.6,-41.93) -- (44.41,-42.52);
\draw [black] (41.075,-11.965) arc (138.88405:41.11595:6.338);
\fill [black] (41.07,-11.97) -- (41.98,-11.69) -- (41.22,-11.03);
\draw (45.85,-9.29) node [above] {$x_2!$};
\end{tikzpicture}
\end{center}
    \caption{The transaction manager used by the 2PC system labeled A2. The label $\textit{yes}_*$ denotes both $\textit{yes}_1$ and $\textit{yes}_2$. Likewise for $\textit{no}_*$.}
    \label{2pc_A2}
\end{figure}
\begin{figure}[!ht]
    \centering
    \includegraphics[width=\textwidth]{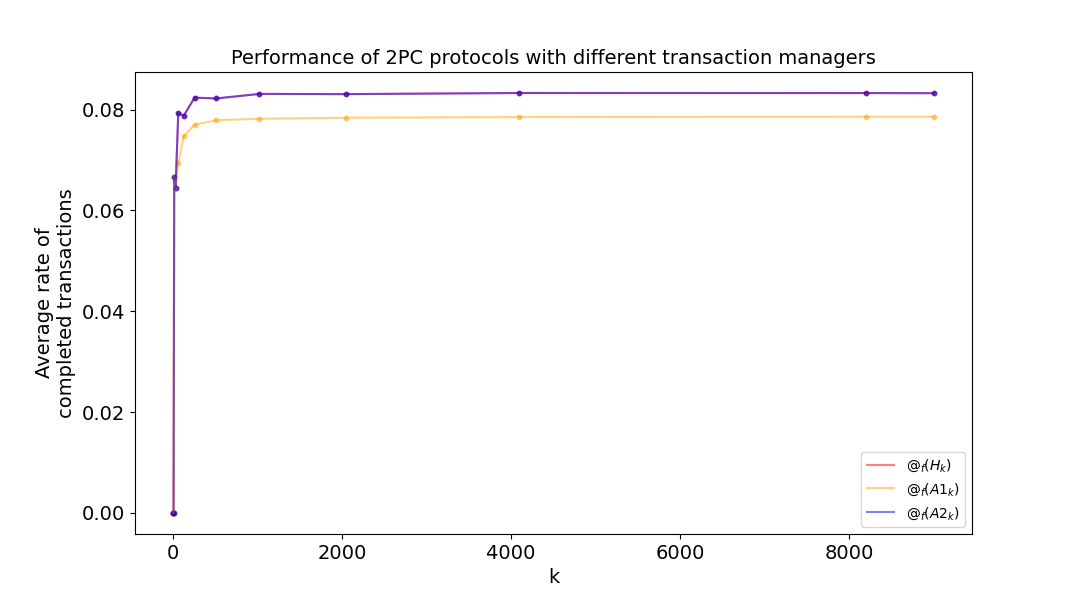}
    \caption{A graph of the $K$-approximation against $K$ for the 2PC case study. Note: $\agg_f(\text{H}_k)$ and $\agg_f(\text{A2}_k)$ overlap completely.}
    \label{fig:2pc-approx}
\end{figure}

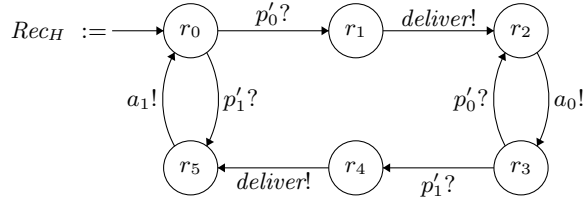
\begin{figure}[!ht]
    \vspace{1em}
    \centering

\begin{center}
\begin{tikzpicture}[scale=\tiksize]
\tikzstyle{every node}+=[inner sep=0pt]
\draw [black] (22.5,-33.2) circle (3);
\draw (22.5,-33.2) node {$r_0$};
\draw [black] (40.8,-33.2) circle (3);
\draw (40.8,-33.2) node {$r_1$};
\draw [black] (59.1,-33.2) circle (3);
\draw (59.1,-33.2) node {$r_2$};
\draw [black] (22.5,-48.4) circle (3);
\draw (22.5,-48.4) node {$r_5$};
\draw [black] (40.8,-48.4) circle (3);
\draw (40.8,-48.4) node {$r_4$};
\draw [black] (59.1,-48.4) circle (3);
\draw (59.1,-48.4) node {$r_3$};
\draw [black] (25.5,-33.2) -- (37.8,-33.2);
\fill [black] (37.8,-33.2) -- (37,-32.7) -- (37,-33.7);
\draw (31.65,-32.7) node [above] {$p'_0?$};
\draw [black] (43.8,-33.2) -- (56.1,-33.2);
\fill [black] (56.1,-33.2) -- (55.3,-32.7) -- (55.3,-33.7);
\draw (49.95,-32.7) node [above] {$\textit{deliver}!$};
\draw [black] (60.871,-35.61) arc (28.43666:-28.43666:10.9);
\fill [black] (60.87,-45.99) -- (61.69,-45.53) -- (60.81,-45.05);
\draw (62.69,-40.8) node [right] {$a_0!$};
\draw [black] (20.721,-45.996) arc (-151.41152:-208.58848:10.859);
\fill [black] (20.72,-35.6) -- (19.9,-36.07) -- (20.78,-36.55);
\draw (18.9,-40.8) node [left] {$a_1!$};
\draw [black] (56.1,-48.4) -- (43.8,-48.4);
\fill [black] (43.8,-48.4) -- (44.6,-48.9) -- (44.6,-47.9);
\draw (49.95,-48.9) node [below] {$p'_1?$};
\draw [black] (24.271,-35.61) arc (28.43666:-28.43666:10.9);
\fill [black] (24.27,-45.99) -- (25.09,-45.53) -- (24.21,-45.05);
\draw (26.09,-40.8) node [right] {$p'_1?$};
\draw [black] (57.353,-45.973) arc (-152.04582:-207.95418:11.034);
\fill [black] (57.35,-35.63) -- (56.54,-36.1) -- (57.42,-36.57);
\draw (55.57,-40.8) node [left] {$p'_0?$};
\draw [black] (13.8,-33.2) -- (19.5,-33.2);
\draw (13.3,-33.2) node [left] {$\textit{Rec}_H\mbox{ }:=$};
\fill [black] (19.5,-33.2) -- (18.7,-32.7) -- (18.7,-33.7);
\draw [black] (37.8,-48.4) -- (25.5,-48.4);
\fill [black] (25.5,-48.4) -- (26.3,-48.9) -- (26.3,-47.9);
\draw (31.65,-48.9) node [below] {$\textit{deliver}!$};
\end{tikzpicture}
\end{center}
    \caption{A manually constructed ABP receiver from~\cite{AlurTripakisSIGACT17}. Note: the synthesis algorithm of~\cite{AlurTripakisSIGACT17} was able to automatically synthesize this receiver, but we will refer to it as the {\it human-made} receiver.}
    \label{fig:abp-rec}
\end{figure}
\begin{figure}[!ht]
    \centering

\begin{center}
\begin{tikzpicture}[scale=\tiksize]
\tikzstyle{every node}+=[inner sep=0pt]
\draw [black] (22.5,-33.2) circle (3);
\draw (22.5,-33.2) node {$r_0$};
\draw [black] (40.8,-33.2) circle (3);
\draw (40.8,-33.2) node {$r_1$};
\draw [black] (59.1,-33.2) circle (3);
\draw (59.1,-33.2) node {$r_2$};
\draw [black] (22.5,-48.4) circle (3);
\draw (22.5,-48.4) node {$r_5$};
\draw [black] (40.8,-48.4) circle (3);
\draw (40.8,-48.4) node {$r_4$};
\draw [black] (59.1,-48.4) circle (3);
\draw (59.1,-48.4) node {$r_3$};
\draw [black] (25.5,-33.2) -- (37.8,-33.2);
\fill [black] (37.8,-33.2) -- (37,-32.7) -- (37,-33.7);
\draw (31.65,-32.7) node [above] {$p'_0?$};
\draw [black] (43.8,-33.2) -- (56.1,-33.2);
\fill [black] (56.1,-33.2) -- (55.3,-32.7) -- (55.3,-33.7);
\draw (49.95,-32.7) node [above] {$\textit{deliver}!$};
\draw [black] (60.871,-35.61) arc (28.43666:-28.43666:10.9);
\fill [black] (60.87,-45.99) -- (61.69,-45.53) -- (60.81,-45.05);
\draw (62.69,-40.8) node [right] {$a_0!$};
\draw [black] (20.721,-45.996) arc (-151.41152:-208.58848:10.859);
\fill [black] (20.72,-35.6) -- (19.9,-36.07) -- (20.78,-36.55);
\draw (18.9,-40.8) node [left] {$a_1!$};
\draw [black] (56.1,-48.4) -- (43.8,-48.4);
\fill [black] (43.8,-48.4) -- (44.6,-48.9) -- (44.6,-47.9);
\draw (49.95,-48.9) node [below] {$p'_1?$};
\draw [black] (38.49,-46.48) -- (24.81,-35.12);
\fill [black] (24.81,-35.12) -- (25.1,-36.01) -- (25.74,-35.24);
\draw (35.27,-40.31) node [above] {$\textit{deliver}!$};
\draw [black] (24.271,-35.61) arc (28.43666:-28.43666:10.9);
\fill [black] (24.27,-45.99) -- (25.09,-45.53) -- (24.21,-45.05);
\draw (26.09,-40.8) node [right] {$p'_1?$};
\draw [black] (57.353,-45.973) arc (-152.04582:-207.95418:11.034);
\fill [black] (57.35,-35.63) -- (56.54,-36.1) -- (57.42,-36.57);
\draw (55.57,-40.8) node [left] {$p'_0?$};
\draw [black] (13.8,-33.2) -- (19.5,-33.2);
\draw (13.3,-33.2) node [left] {$\textit{Rec}_A\mbox{ }:=$};
\fill [black] (19.5,-33.2) -- (18.7,-32.7) -- (18.7,-33.7);
\end{tikzpicture}
\end{center}
    \caption{An automatically synthesized ABP receiver from~\cite{AlurTripakisSIGACT17}. We will refer to this receiver as the {\it algorithm-made} receiver.}
    \label{fig:abp-rec_pr}
\end{figure}
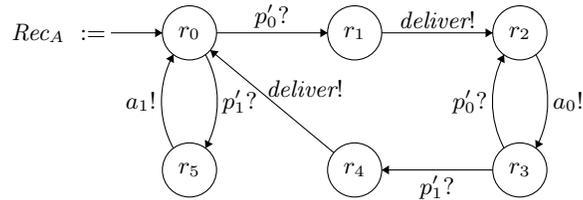
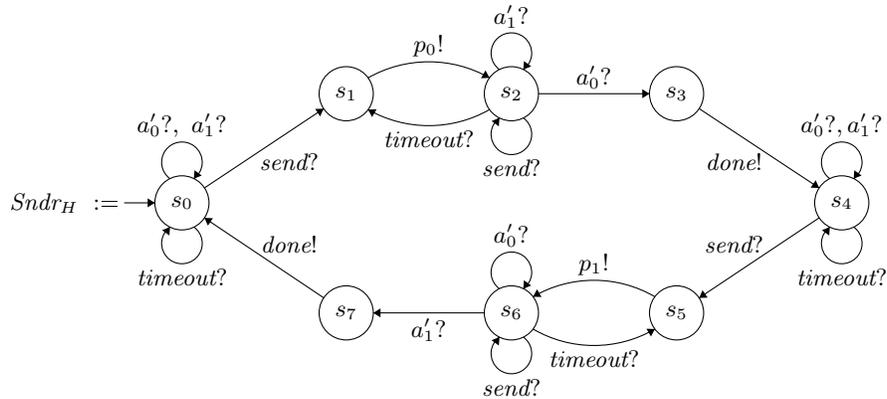
\begin{figure}[!ht]
    \centering

\begin{center}
\begin{tikzpicture}[scale=\tiksize]
\tikzstyle{every node}+=[inner sep=0pt]
\draw [black] (3.8,-29.2) circle (3);
\draw (3.8,-29.2) node {$s_0$};
\draw [black] (22,-17.1) circle (3);
\draw (22,-17.1) node {$s_1$};
\draw [black] (40.3,-17.1) circle (3);
\draw (40.3,-17.1) node {$s_2$};
\draw [black] (58.6,-17.1) circle (3);
\draw (58.6,-17.1) node {$s_3$};
\draw [black] (76.9,-29.2) circle (3);
\draw (76.9,-29.2) node {$s_4$};
\draw [black] (22,-41.4) circle (3);
\draw (22,-41.4) node {$s_7$};
\draw [black] (40.3,-41.4) circle (3);
\draw (40.3,-41.4) node {$s_6$};
\draw [black] (58.6,-41.4) circle (3);
\draw (58.6,-41.4) node {$s_5$};
\draw [black] (2.477,-26.52) arc (234:-54:2.25);
\draw (3.8,-21.95) node [above] {$a'_0?,\mbox{ }a'_1?$};
\fill [black] (5.12,-26.52) -- (6,-26.17) -- (5.19,-25.58);
\draw [black] (6.3,-27.54) -- (19.5,-18.76);
\fill [black] (19.5,-18.76) -- (18.56,-18.79) -- (19.11,-19.62);
\draw (15.68,-23.65) node [below] {$\textit{send}?$};
\draw [black] (24.423,-15.341) arc (119.66327:60.33673:13.593);
\fill [black] (37.88,-15.34) -- (37.43,-14.51) -- (36.93,-15.38);
\draw (31.15,-13.06) node [above] {$p_0!$};
\draw [black] (43.3,-17.1) -- (55.6,-17.1);
\fill [black] (55.6,-17.1) -- (54.8,-16.6) -- (54.8,-17.6);
\draw (49.45,-16.6) node [above] {$a'_0?$};
\draw [black] (61.1,-18.75) -- (74.4,-27.55);
\fill [black] (74.4,-27.55) -- (74.01,-26.69) -- (73.45,-27.52);
\draw (64.97,-23.65) node [below] {$\textit{done}!$};
\draw [black] (74.4,-30.86) -- (61.1,-39.74);
\fill [black] (61.1,-39.74) -- (62.04,-39.71) -- (61.48,-38.88);
\draw (64.97,-34.8) node [above] {$\textit{send}?$};
\draw [black] (42.7,-39.611) arc (120.28318:59.71682:13.385);
\fill [black] (42.7,-39.61) -- (43.64,-39.64) -- (43.14,-38.78);
\draw (49.45,-37.28) node [above] {$p_1!$};
\draw [black] (37.3,-41.4) -- (25,-41.4);
\fill [black] (25,-41.4) -- (25.8,-41.9) -- (25.8,-40.9);
\draw (31.15,-41.9) node [below] {$a'_1?$};
\draw [black] (19.51,-39.73) -- (6.29,-30.87);
\fill [black] (6.29,-30.87) -- (6.68,-31.73) -- (7.23,-30.9);
\draw (15.68,-34.8) node [above] {$\textit{done}!$};
\draw [black] (37.877,-18.859) arc (-60.33673:-119.66327:13.593);
\fill [black] (24.42,-18.86) -- (24.87,-19.69) -- (25.37,-18.82);
\draw (31.15,-21.14) node [below] {$\textit{timeout}?$};
\draw [black] (56.2,-43.189) arc (-59.71682:-120.28318:13.385);
\fill [black] (56.2,-43.19) -- (55.26,-43.16) -- (55.76,-44.02);
\draw (49.45,-45.52) node [below] {$\textit{timeout}?$};
\draw [black] (5.123,-31.88) arc (54:-234:2.25);
\draw (3.8,-36.45) node [below] {$\textit{timeout}?$};
\fill [black] (2.48,-31.88) -- (1.6,-32.23) -- (2.41,-32.82);
\draw [black] (41.623,-44.08) arc (54:-234:2.25);
\draw (40.3,-48.65) node [below] {$\textit{send}?$};
\fill [black] (38.98,-44.08) -- (38.1,-44.43) -- (38.91,-45.02);
\draw [black] (38.977,-38.72) arc (234:-54:2.25);
\draw (40.3,-34.15) node [above] {$a'_0?$};
\fill [black] (41.62,-38.72) -- (42.5,-38.37) -- (41.69,-37.78);
\draw [black] (78.223,-31.88) arc (54:-234:2.25);
\draw (76.9,-36.45) node [below] {$\textit{timeout}?$};
\fill [black] (75.58,-31.88) -- (74.7,-32.23) -- (75.51,-32.82);
\draw [black] (75.577,-26.52) arc (234:-54:2.25);
\draw (76.9,-21.95) node [above] {$a'_0?,a'_1?$};
\fill [black] (78.22,-26.52) -- (79.1,-26.17) -- (78.29,-25.58);
\draw [black] (38.977,-14.42) arc (234:-54:2.25);
\draw (40.3,-9.85) node [above] {$a'_1?$};
\fill [black] (41.62,-14.42) -- (42.5,-14.07) -- (41.69,-13.48);
\draw [black] (41.623,-19.78) arc (54:-234:2.25);
\draw (40.3,-24.35) node [below] {$\textit{send}?$};
\fill [black] (38.98,-19.78) -- (38.1,-20.13) -- (38.91,-20.72);
\draw [black] (-2.6,-29.2) -- (0.8,-29.2);
\draw (-3.1,-29.2) node [left] {$\textit{Sndr}_H\mbox{ }:=$};
\fill [black] (0.8,-29.2) -- (0,-28.7) -- (0,-29.7);
\end{tikzpicture}
\end{center}
    \caption{A manually constructed ABP sender from~\cite{AlurTripakisSIGACT17}. We will refer to this sender as the {\it human-made} sender.}
    \label{fig:abp-sndr}
\end{figure}
\begin{figure}[!ht]
    \centering

\begin{center}
\begin{tikzpicture}[scale=\tiksize]
\tikzstyle{every node}+=[inner sep=0pt]
\draw [black] (3.8,-29.2) circle (3);
\draw (3.8,-29.2) node {$s_0$};
\draw [black] (22,-17.1) circle (3);
\draw (22,-17.1) node {$s_4$};
\draw [black] (58.6,-17.1) circle (3);
\draw (58.6,-17.1) node {$s_1$};
\draw [black] (76.9,-29.2) circle (3);
\draw (76.9,-29.2) node {$s_2$};
\draw [black] (22,-47.5) circle (3);
\draw (22,-47.5) node {$s_7$};
\draw [black] (40.3,-41.4) circle (3);
\draw (40.3,-41.4) node {$s_6$};
\draw [black] (58.6,-41.4) circle (3);
\draw (58.6,-41.4) node {$s_5$};
\draw [black] (40.3,-29.2) circle (3);
\draw (40.3,-29.2) node {$s_3$};
\draw [black] (2.477,-26.52) arc (234:-54:2.25);
\draw (3.8,-21.95) node [above] {$a'_0?,a'_1?$};
\fill [black] (5.12,-26.52) -- (6,-26.17) -- (5.19,-25.58);
\draw [black] (5.123,-31.88) arc (54:-234:2.25);
\draw (3.8,-36.45) node [below] {$\textit{timeout}?$};
\fill [black] (2.48,-31.88) -- (1.6,-32.23) -- (2.41,-32.82);
\draw [black] (6.3,-27.54) -- (19.5,-18.76);
\fill [black] (19.5,-18.76) -- (18.56,-18.79) -- (19.11,-19.62);
\draw (15.68,-23.65) node [below] {$\textit{send}?$};
\draw [black] (25,-17.1) -- (55.6,-17.1);
\fill [black] (55.6,-17.1) -- (54.8,-16.6) -- (54.8,-17.6);
\draw (40.3,-17.6) node [below] {$a'_0?$};
\draw [black] (61.1,-18.75) -- (74.4,-27.55);
\fill [black] (74.4,-27.55) -- (74.01,-26.69) -- (73.45,-27.52);
\draw (64.97,-23.65) node [below] {$\textit{done}!$};
\draw [black] (74.4,-30.86) -- (61.1,-39.74);
\fill [black] (61.1,-39.74) -- (62.04,-39.71) -- (61.48,-38.88);
\draw (63.36,-34.8) node [above] {$send?,a'_1?$};
\draw [black] (56.178,-43.159) arc (-60.33059:-119.66941:13.591);
\fill [black] (42.72,-43.16) -- (43.17,-43.99) -- (43.66,-43.12);
\draw (49.45,-45.44) node [below] {$p_1!$};
\draw [black] (37.45,-42.35) -- (24.85,-46.55);
\fill [black] (24.85,-46.55) -- (25.76,-46.77) -- (25.45,-45.82);
\draw (32.81,-45.02) node [below] {$a'_1?$};
\draw [black] (37.45,-40.45) -- (6.65,-30.15);
\fill [black] (6.65,-30.15) -- (7.25,-30.88) -- (7.56,-29.93);
\draw (24.53,-34.7) node [above] {$\textit{send}?$};
\draw [black] (19.88,-45.37) -- (5.92,-31.33);
\fill [black] (5.92,-31.33) -- (6.13,-32.25) -- (6.83,-31.54);
\draw (12.38,-39.83) node [left] {$\textit{done}!$};
\draw [black] (24.5,-18.75) -- (37.8,-27.55);
\fill [black] (37.8,-27.55) -- (37.41,-26.69) -- (36.85,-27.52);
\draw (35.15,-22.65) node [above] {$\textit{timeout}?$};
\draw [black] (37.319,-28.884) arc (-100.14368:-146.80185:20.981);
\fill [black] (23.46,-19.72) -- (23.48,-20.66) -- (24.31,-20.11);
\draw (27.81,-26.23) node [below] {$p_0!$};
\draw [black] (22.739,-14.205) arc (193.40804:-94.59196:2.25);
\draw (29.18,-11.48) node [above] {$\textit{send}?$};
\fill [black] (24.75,-15.93) -- (25.64,-16.23) -- (25.41,-15.25);
\draw [black] (19.202,-16.051) arc (277.19112:-10.80888:2.25);
\draw (15.5,-11.69) node [above] {$a'_1?$};
\fill [black] (21.13,-14.24) -- (21.53,-13.38) -- (20.53,-13.51);
\draw [black] (75.577,-26.52) arc (234:-54:2.25);
\draw (76.9,-21.95) node [above] {$a'_0?$};
\fill [black] (78.22,-26.52) -- (79.1,-26.17) -- (78.29,-25.58);
\draw [black] (78.223,-31.88) arc (54:-234:2.25);
\draw (76.9,-36.45) node [below] {$\textit{timeout}?$};
\fill [black] (75.58,-31.88) -- (74.7,-32.23) -- (75.51,-32.82);
\draw [black] (42.697,-39.606) arc (120.37251:59.62749:13.356);
\fill [black] (56.2,-39.61) -- (55.77,-38.77) -- (55.26,-39.63);
\draw (49.45,-37.27) node [above] {$a'_0?,timeout?$};
\draw [black] (-2.6,-29.2) -- (0.8,-29.2);
\draw (-3.1,-29.2) node [left] {$\textit{Sndr}_A\mbox{ }:=$};
\fill [black] (0.8,-29.2) -- (0,-28.7) -- (0,-29.7);
\end{tikzpicture}
\end{center}
    \caption{An automatically synthesized ABP sender from~\cite{AlurTripakisSIGACT17}. We will refer to this sender as the {\it algorithm-made} sender.}
    \label{fig:abp-sndr_pr}
\end{figure}

\begin{figure}[!ht]
    \centering
    \includegraphics[width=\textwidth]{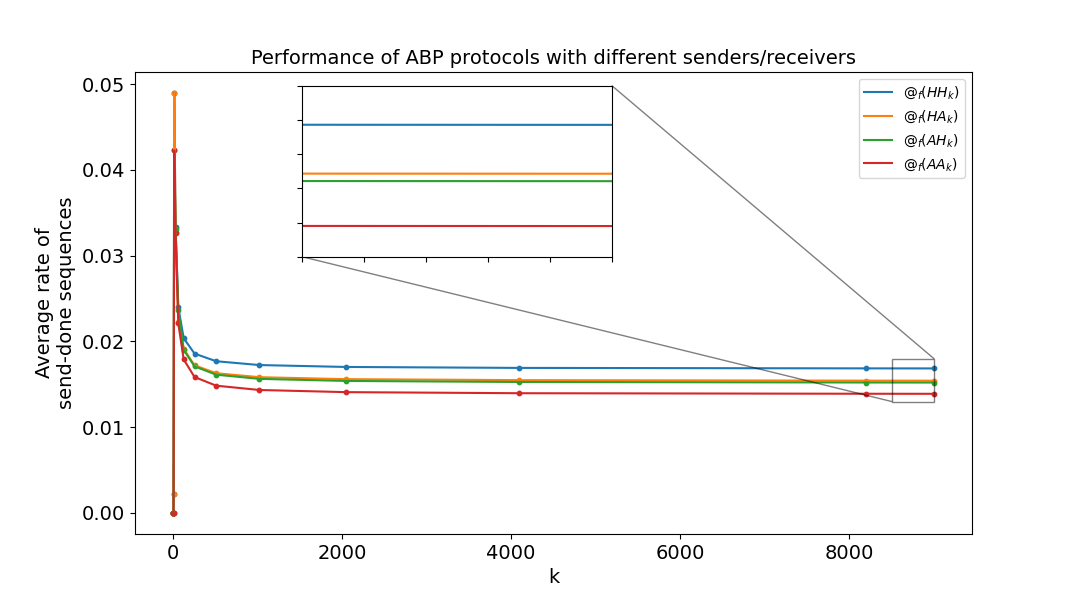}
    \caption{A graph of the $K$-approximation against $K$ for the ABP case study.}
    \label{fig:abp-approx}
\end{figure}

\end{document}